\newtheorem{definition}{Definition}
\newtheorem{theorem}{Theorem}
\newtheorem{corollary}{Corollary}
\newcommand{\Ra}{\ensuremath{\stackrel{\$}{\leftarrow}{\xspace}}}
\definecolor{mygreen}{RGB}{112, 173, 71}
\definecolor{myred}{RGB}{192, 0, 0}
\definecolor{kellygreen}{rgb}{0.3, 0.73, 0.09}
\newcommand{\cut}[1]{}
\newcommand{\kem}{\ensuremath{\mathcal{E}}}
 \pgfplotsset{compat=1.3}
\newcommand{\sk}{\ensuremath{\mathsf{sk}}}
\newcommand{\pk}{\ensuremath{\mathsf{pk}}}
 \newcommand{\sgnkeygen}{{\mathtt{KeyGen}}}
 \newcommand{\sgnsign}{{\mathtt{Sign}}}
 \newcommand{\presgnsign}{{\mathtt{PSgn}}}
 \newcommand{\precomputesgn}{{\mathtt{Precmp}}}
  \newcommand{\sgnverify}{{\mathtt{Verify}}}
     \newcommand{\kagen}{{\mathtt{KeyGen}}}
   \newcommand{\kemenc}{{\mathtt{Encaps}}}
 \newcommand{\kemdec}{{\mathtt{Decaps}}}
  \newcommand{\x}{{\mathtt{x}}}
     \newcommand{\encap}{{\mathtt{Encp}}}
          \newcommand{\decap}{{\mathtt{Decp}}}
\newcommand{\sys}{\ensuremath{\texttt{Beskar}}\xspace}
\newcommand{\micro}{MicroFedML\xspace}
\newcommand{\microfirst}{MicroFedML1\xspace}
\newcommand{\microsecond}{MicroFedML2\xspace}
\newcommand{\flamingo}{Flamingo\xspace}
\newcommand{\pqsa}{PQSA\xspace}
\newcommand{\eseafl}{e-SeaFL\xspace}
\newcommand{\node}{\ensuremath{\mathtt{A}}}
\newcommand{\agg}{\ensuremath{\mathtt{S}}}
\newcommand{\user}{\ensuremath{\mathtt{P}}}
\newcommand\mal[1]{\textcolor{red}{#1}}
\renewcommand{\vec}[1]{\ensuremath{\mathbf{#1}}}
\newcommand{\prf}{\ensuremath{\mathtt{PRF}}}
\newcommand{\Ulist}{\ensuremath{\mathcal{L}}}
\newcommand{\Sim}{\ensuremath{\mathtt{Sim}}}
\newcommand{\ppp}{\ensuremath{\mathcal{P}}}
\newcommand{\aaa}{\ensuremath{\mathcal{A}}}
\newcommand{\RO}{\ensuremath{\mathtt{RO}}}
\newcommand{\real}{\ensuremath{\mathtt{REAL}}}
 \newcommand{\indist}{{{indistinguishability}}}
 \newcommand{\PC}[1]{
 	\vspace{2px}
 	\noindent{\bf \IfEndWith{#1}{:}{#1}{#1:}}
 }
\newcommand{\circled}[2][]{%
	\tikz[baseline=(char.base)]{%
		\node[shape = circle, draw, fill=red, color=red, inner sep = .2pt]
		(char) {\phantom{\ifblank{#1}{#2}{#1}}};%
		\node at (char.center) {\makebox[0pt][c]{\color{white}{#2}}};}}
\newcommand{\Kg}{\ensuremath{\mathsf{Kg}}}
\newcommand{\Sgn}{\ensuremath{\mathsf{Sgn}}}
\newcommand{\Vfy}{\ensuremath{\mathsf{Vfy}}}
\newcommand{\Hash}{\ensuremath{\mathsf{Hash}}}
\newcommand{\Rssh}{\ensuremath{\mathsf{Rssh}}}
\newcommand{\Rsshe}{\ensuremath{\mathsf{Rssh}_\mathsf{e}}}
\newcommand{\sshl}{\ensuremath{\mathsf{ssh}_\mathsf{l}}}
\newcommand{\PRG}{\ensuremath{\mathsf{PRG}}}
\newcommand{\PRF}{\ensuremath{\mathsf{PRF}}}
\newcommand{\AEnc}{\ensuremath{\mathsf{AEnc}}}
\newcommand{\ADec}{\ensuremath{\mathsf{ADec}}}
\newcommand{\SEnc}{\ensuremath{\mathsf{SEnc}}}
\newcommand{\SDec}{\ensuremath{\mathsf{SDec}}}
\newcommand{\PIL}{\ensuremath{\mathsf{PI}_\mathsf{L}}}
\newcommand{\Mu}{\ensuremath{\mathsf{mu}}}
\newcommand{\Ex}{\ensuremath{\mathsf{ex}}}
\newcommand{\Su}{\ensuremath{\mathsf{su}}}
\newcommand{\Ka}{\ensuremath{\mathsf{Ka}}}
\newcommand{\PSgn}{\ensuremath{\mathsf{PSgn}}}
\newcommand{\cmark}{\ding{51}}%
\newcommand{\xmark}{\ding{55}}%
\definecolor{ForestGreen}{rgb}{0.0, 0.5, 0.0}
\newcommand{\yes}{{\color{ForestGreen}\cmark}}
\newcommand{\no}{{\color{red}\xmark}}
\begin{document}
%
\title{Efficient Full-Stack Private Federated Deep Learning with Post-Quantum Security}

%
%
%

\author{Yiwei Zhang, Rouzbeh Behnia, Attila A. Yavuz, Reza Ebrahimi, Elisa Bertino
\thanks{Yiwei Zhang and Elisa Bertino are with Purdue University. E-mail: {yiweizhang, bertino}@purdue.edu.}
\thanks{Rouzbeh Behnia, Attila A. Yavuz, Reza Ebrahimi are with the University of South Florida. E-mail: {behnia, attilaayavuz, ebrahimim}@usf.edu.}
}

\maketitle

\begin{abstract}

Federated learning (FL) enables collaborative model training while preserving user data privacy by keeping data local. Despite these advantages, FL remains vulnerable to privacy attacks on user updates and model parameters during training and deployment. Secure aggregation protocols have been proposed to protect user updates by encrypting them, but these methods often incur high computational costs and are not resistant to quantum computers. Additionally, differential privacy (DP) has been used to mitigate privacy leakages, but existing methods focus on secure aggregation or DP, neglecting their potential synergies. To address these gaps, we introduce \sys, a novel framework that provides post-quantum secure aggregation, optimizes computational overhead for FL settings, and defines a comprehensive threat model that accounts for 
a wide spectrum of adversaries. We also integrate DP into different stages of FL training to enhance privacy protection in diverse scenarios. Our framework provides a detailed analysis of the trade-offs between security, performance, and model accuracy, representing the first thorough examination of secure aggregation protocols combined with various DP approaches for post-quantum secure FL. \sys aims to address the pressing privacy and security issues FL while ensuring quantum-safety and robust performance.

\end{abstract}

\begin{IEEEkeywords}
Privacy-preserving AI, post-quantum security, differential privacy, secure aggregation, deep learning.
\end{IEEEkeywords}

%
\IEEEpeerreviewmaketitle

\hfill  
 
\hfill

\section{Introduction}
Federated learning (FL) enables collaborative learning of a shared model between distributed parties while keeping the data local, mitigating data privacy and collection challenges common in traditional centralized learning.  
In large-scale FL, clients with limited computational resources, such as mobile devices, can contribute to training a global model with the assistance of a central server. 
In each iteration, the central server distributes an \textit{intermediate model} to all clients, who then train the model using their local data to compute \textit{local updates} (i.e., user gradients). 
The server aggregates the local updates from all users, refines the intermediate model, and sends it back to the clients. 
This iterative process continues until the model achieves a satisfactory level of performance, resulting in a \textit{final model} to be deployed in production.
FL contributes to data privacy by keeping the user data local.  
However, recent attacks have demonstrated that deploying a plain FL paradigm is insufficient to protect the privacy of the participating users' data~\cite{carlini2021extracting, shokri2017membership, carlini2019secret}. More specifically, these attacks can undermine the privacy of the training data by having access only to the user updates or the model at any stage (training and/or deployment). 
A well-known solution to protect the user updates during the training phase is secure aggregation~\cite{bonawitz2017practical}, where the server can compute the global model without knowledge of any individual user update. This is achieved by masking/encrypting the updates so that the masking factors cancel out during aggregation. 
Secure aggregation can be achieved with different techniques, such as secure distributed computation~\cite{rathee2022elsa, behnia2023efficient,bell2020secure} or Homomorphic Encryption (HE)~\cite{truex2019hybrid,elahi2014privex}. 
However, existing secure aggregation protocols often incur high communication and/or computation overhead.  

Additionally, to our knowledge, except for a few~(e.g., \cite{yang2022post}), the existing methods are primarily based on conventional cryptographic tools. 
However, such tools are not resistant to quantum computers, which are on the verge of becoming a reality. 
Therefore a pressing requirement is that FL protocols (and other distributed protocols for machine learning) must be post-quantum secure. Given the directives by the NSA and the White House~\cite{chen2016report,nist2024pqc,wh2024report}, NIST has suggested a series of post-quantum (PQ) secure cryptographic schemes.
%
%
While one might consider the direct adoption of the recent NIST PQ cryptographic standards~\cite{nistpqalgo}, these schemes and their subsequent extensions (e.g., \cite{conf/ccs/BehniaOYR18, conf/ACSAC/BehniaY21}), despite their elegant designs, might not be suitable for highly distributed settings (e.g., FL) with resource- and bandwidth-constrained devices (e.g., mobile phones).
Finally, existing methods do not protect the intermediate model distributed by the server in each iteration, making it vulnerable to privacy attacks by adversaries disguised as clients.




Differential privacy (DP)~\cite{dwork2006differential}, as a popular statistical tool, can mitigate these privacy leakages effectively. 
This is achieved by injecting a controlled noise to the model to distort the effects of individual data points on model parameters. 
Abadi et al.~\cite{AbadiCGMMT016} introduced the concept of DP in deep learning by proposing DP-SGD, a privacy-preserving version of the well-known SGD algorithm.
In the FL setting, DP  can be applied independently or in conjunction with secure aggregation at various stages of the training process (e.g., ~\cite{chen2021voltpillager, van2018foreshadow, chamikara2022local, truex2020ldp, kairouz2021advances, balle2020privacy, ramaswamy2020training}), taking into account different performance impacts and adversarial models. Accordingly, implementing these privacy-preserving techniques often depends on thoroughly understanding the adversarial and threat models involved. Existing approaches have primarily been designed focusing on individual privacy-preserving methods (i.e., secure aggregation or DP); therefore, there remains a significant gap regarding a comprehensive threat model tailored for privacy-preserving FL. 
We stress that establishing a comprehensive threat model and evaluating the effectiveness of recommended privacy-preserving methods for each threat scenario would be critical for organizations striving to address diverse privacy requirements and comply with regulations such as HIPAA.

\begin{table}[ht]
    \centering
    \small
    \caption{High-level Comparison with State-of-the-Art}
    \label{tab:summary}
    \begin{threeparttable}
    \begin{tabular}{|c|c|c|c|c|}\hline 
        \textbf{Protocol} & \textbf{Rd.}  &  \begin{tabular}{@{}c@{}c@{}}
		 \textbf{Dropout} \\ \textbf{Resilience} \end{tabular}  & 
   \textbf{Model Privacy}
   & \textbf{PQ} \\\hline 
         \cite{bell2020secure} & 6 & Low&  \no & \no \\\hline 
       \cite{flamingo}  & 3 & Moderate& \no &  \no  \\ \hline  
        \cite{microfedml} & 3 & Low& \no &  \no  \\\hline 
       \cite{yang2022post} & 3 & Low& \no &  \yes \\ \hline
       \cite{behnia2023efficient} & 1 & High &  \no & \no \\ \hline
       Ours  & 1 & High & \yes & \yes \\\hline 
    \end{tabular}
\end{threeparttable}

\end{table}

\subsection{Our Contribution} \label{subsec:OurContrib}


In response to the above requirements, we propose \sys. To our knowledge, \sys~is the first to introduce a comprehensive threat model for FL settings by considering adversaries with different capabilities. 
As shown in Table~\ref{tab:summary}, \sys is the only solution offering high dropout resilience with only one communication round, while simultaneously ensuring post-quantum (PQ) security
and 
privacy of user data
during and after training. 
\sys~{\em fills the critical gaps in existing protocols by its balance of privacy, minimal communication requirements, and robustness against quantum attacks, making it a holistic solution for FL in the post-quantum era}.
We detail the contribution of our work in what follows.


\vspace{5pt}
\noindent
$\bullet$~\textbf{Efficient Post-Quantum Secure Aggregation.} We propose a new secure aggregation framework with post-quantum security by leveraging NIST post-quantum standards. Despite their elegant design, most of the suggested standards 
(e.g.,~\cite{ducas2018crystals,bos2018crystals}) are not optimized for resource-constrained environments, such as mobile devices, where minimizing computational overhead is critical.
While prior works 
(e.g.,~\cite{yavuz2013eta}) have demonstrated the effectiveness of pre-computation techniques for minimizing the overhead for mobile devices, applying these methods to their post-quantum counterparts is not straightforward due to their inherent design requirements (e.g., rejection sampling). Other methods 
(e.g.,~\cite{Tachyon}) rely on precomputed tables that might not be suitable for mobile devices due to their significant storage overhead. 
To address these challenges, we develop two lightweight yet efficient pre-computation strategies that explicitly account for the rejection sampling of post-quantum methods and eliminate the need for large lookup tables. Our methods are specifically designed for low-end devices (e.g., mobile devices) with limited computational and storage resources, ensuring practical deployment in resource-constrained environments.
Both optimizations leverage the characteristic that FL operates over a relatively small number of iterations, typically in the order of hundreds. 
Our first optimization algorithm improves the signature generation of Dilithium~\cite{dilithium}  by pre-computing message-independent tokens, thereby minimizing the signature generation overhead. 
This results in a 30\% faster signing process compared to the standard Dilithium algorithm.
Generating the masking terms to hide each element of user gradient is one of the dominant costs in secure aggregation protocols designed for larger deep learning models. Our second optimization algorithm significantly improves this process by achieving favorable computation and storage trade-offs and pre-computing a masking table for each client. 
Our experiments demonstrate that the two optimizations yield efficiency improvements of 134x, 1.1x, and 1233x in the aggregation phase with 1,000 clients.


\noindent $\bullet$~\textbf{Comprehensive Threat Models for FL.}  Existing approaches~\cite{flamingo,microfedml,yang2022post,sun2020ldp,geyer2017differentially} typically assume a single type of adversary, whereas FL applications face diverse threats and have to meet varying security requirements depending on their specific use cases. 
We thus propose the first comprehensive security framework for FL settings by considering three types of adversaries with varying capabilities and access levels. 
We define three distinct threat models corresponding to these adversaries, each posing a unique threat to the privacy of local updates, intermediate models, and final models, respectively.
This comprehensive approach enables developers and organizations to identify the threat model most pertinent to their privacy needs or mandated by regulatory frameworks such as HIPAA. Thus, a customized security solution is provided that aligns with specific compliance obligations and protection needs and guarantees robust and targeted privacy safeguards. \vspace{1mm}

\noindent $\bullet$~\textbf{Various Compositions of Secure Aggregation with DP.} 
Existing approaches~\cite{bell2020secure,yang2022post,yang2023privatefl} primarily focus on enhancing secure aggregation using various strategies (e.g., MPC, HE) or applying DP methods separately. However, these approaches do not offer comprehensive privacy protections for FL, particularly against the threat models discussed above.
To address this gap, we integrate different DP applications within FL to provide tailored privacy protection against various threat models. Our approach involves applying DP at different stages of FL training, effectively defending against the adversaries defined in our threat models.
Additionally, we conduct a thorough performance analysis of our approach, emphasizing the trade-offs between security, performance, and accuracy. To our knowledge, this is the first work to analyze the performance of secure aggregation protocols using different DP approaches, offering valuable insights into their effectiveness across various scenarios.

\section{Preliminary}


Vectors are denoted by bold letters (e.g., $\Vec{a}$). We define a pseudorandom function $\prf: \{0,1\}^*\rightarrow \Vec{a}  $ where $\Vec{a}$ shares the same dimension as the global model. 
The number of elements in a list  $\mathcal{L}$ is represented as  $|\mathcal{L}|$. Given two vectors $\Vec{a}$ and $\Vec{b}$ with the same dimension, $\Vec{a}+\Vec{b}$ denotes the element-wise addition. $[j]$ denotes the set $\{1,\dots,j\}$. In the following, we define the cryptographic building blocks used in our protocol.

\begin{definition}[Key-encapsulation \cite{bos2018crystals}]\label{def:kem}
A key-encapsulation mechanism  $\kem=\{\sgnkeygen, \kemenc, \kemdec\}$  with key space $\mathcal{K} $ is defined as follows. 
\begin{itemize}
    \item $(\pk_\kem,\sk_\kem)\gets \sgnkeygen(1^\kappa)$: On the input of the security parameter $\kappa$, it returns a pair consisting of a public and private key $(\pk_\kem,\sk_\kem)$. 
    \item $(c_x, x)  \gets \kemenc(\pk_\kem)$: On the input of the public key, it returns a key $x\in \mathcal{K}$ and its ciphertext $c_x$. 
    \item $\{\bot, x\}\gets \kemdec(\sk_\kem,c_x)$: On the input of the secret key and ciphertext, it either outputs the key $x\in \mathcal{K}$ or $\bot$, indicating rejection. 
\end{itemize}
  
\end{definition}
 
A key encapsulation algorithm is $(1-\beta)$-correct if $\Pr(c_x \gets \kem.\kemdec(\sk_\kem,c_x)$ :$(c_x,x) \gets \kem.\kemenc(\pk_\kem))\geq 1-\beta$ where probability is taken over $\kem.\sgnkeygen(\cdot)$ and  $\kem.\kemenc(\cdot)$. 
The standard security notion for a key-encapsulation algorithm is indistinguishably under a chosen-ciphertext attack (IND-CCA) where the adversary has access to a $\kem.\kemdec(\cdot)$ oracle.

\begin{definition}[Digital Signatures]\label{def:sig}
A digital signature scheme  for a messages space $\mathcal{M}_{\Pi}$ is defined by $\Pi$:$(\sgnkeygen$, $\sgnsign$, $\sgnverify)$:

\begin{itemize}

\item $(\pk_\Pi,\sk_\Pi)\gets\sgnkeygen(1^\kappa)$: On the input of security parameter $\kappa$, it returns a pair consisting of a public and private key $(\pk_\Pi,\sk_\Pi)$.

\item $\sigma \gets \sgnsign(\sk_\Pi,m)$:  On the input of a secret private key $\sk_\Pi$ and a message $m\in\mathcal{M}_{\Pi}$, it outputs a signature $\sigma$.
\item $\{0,1\}\gets \sgnverify(\pk_\Pi,m,\sigma)$:  On the input of a public key $\pk_\Pi$, a message $m \in\mathcal{M}_{\Pi}$, and an alleged signature $\sigma$, it outputs $1$ if $\sigma$ is a valid signature under $\pk_\Pi$ for message $m$. Otherwise, it outputs $0$.
\end{itemize}

\end{definition}
A digital signature scheme is correct if for any $m\in\mathcal{M}_{\Pi}$, $\sgnverify(\pk_\Pi,m,\sigma)=1$, where $(\pk_\Pi,\sk_\Pi)\gets\sgnkeygen(1^\kappa)$ and $\sigma\gets\sgnsign(\sk_\Pi, m)$.
The standard security notion for a digital signature scheme is existential unforgeability against adaptive chosen message attacks (EU-CMA), defined in Appendix \ref{sec:cryptonotions}.

Following Bell et al.~\cite{Bell2020}, we utilize the following $\alpha$-summation ideal functionality to prove the privacy of our protocol. The $\alpha$-summation ideal functionality necessity for a sufficient proportion of honest clients is to ensure that the aggregated model in the secure aggregation setting does not leak any information about each user update. 

\begin{definition}[$\alpha$-summation ideal functionality \cite{Bell2020}]\label{def:alphasummation}

  Given  $p,n,d$ as integers, we let $L\subseteq [n]$ and $\mathcal{X}_L:=\{\Vec{x}_i\}_{i\in {L}}$ where $\Vec{x}_i \in\mathbb{Z}_p^d$. Now, given a $0 \leq \alpha \leq 1$ and  $Q_{L}$ as the set of partitions of $L$  and a set of pairwise disjoint subsets  $ \{L_{1}, \dots, L_l\} \in Q_{L}$, the $\alpha$-summation ideal functionality $\mathcal{F}_{\vec{x},\alpha}(\cdot)$
computes $\mathcal{F}_{\vec{x},\alpha}(\{L_i\}_{i\in[1,\dots,l]}) \rightarrow  \{\Vec{s}_i\}_{i\in[1,\dots,l]}$
where 
\begin{equation*}
  \forall j\in[1,\dots,l] ,\Vec{s}_{j} =
    \begin{cases}
       \sum_{j\in Q_L}\vec{x}_j & \text{if}~Q_L|\geq \alpha|L| \\
      
      \bot & \text{else.}
    \end{cases}       
\end{equation*}

\end{definition}

\begin{definition}[The MSIS Problem~\cite{ducas2018crystals}]\label{def:msis}
With an algorithm $\mathcal{A}$, we associate the advantage function $Adv_{m, k, \gamma}^{MSIS}$ to solve the $MSIS_{m, k, \gamma}$ problem over the ring $R_q$ as 


\begin{align*}
  Adv_{m, k, \gamma}^{MSIS}(A) := 
  &\mathbb{P}(0 < \|y\|_\infty \leq \gamma \wedge (\mathcal{I} | \mathcal{A}) \cdot y = 0 \\
  &\mid \mathcal{A} \gets R_{q}^{m \times k}; \, y \gets A(\mathcal{A}))     
\end{align*}

\end{definition}

\begin{definition}[Differential Privacy~\cite{abadi2016deep}]\label{def:dp}
A randomized mechanism $\mathcal{M}$: $\mathcal{X} \to \mathcal{Y}$ satisfies $(\epsilon, \delta)$-DP, if for any two adjacent datasets $X, X' \in \mathcal{X}$ that differ in only a single data element and for any subset of output $Y \subseteq \mathcal{Y}$, $\mathbb{P}(M(X) \in Y) \leq exp(\epsilon) \cdot \mathbb{P}(M(X') \in Y) + \delta$ holds. 

\end{definition}

The parameter $(\epsilon, \delta)$ is often called the \textit{privacy budget}. Specifically, $\epsilon$ represents the privacy guarantee: a lower $\epsilon$ corresponds to a higher level of privacy; and $\delta$ indicates the probability that the upper-bound does not hold.

\section{Models}

In this section, we first introduce the system model for \sys, then define its threat and security models. 



\subsection{System Model}

\begin{figure}
  \centering
  \includegraphics[width=1\linewidth]{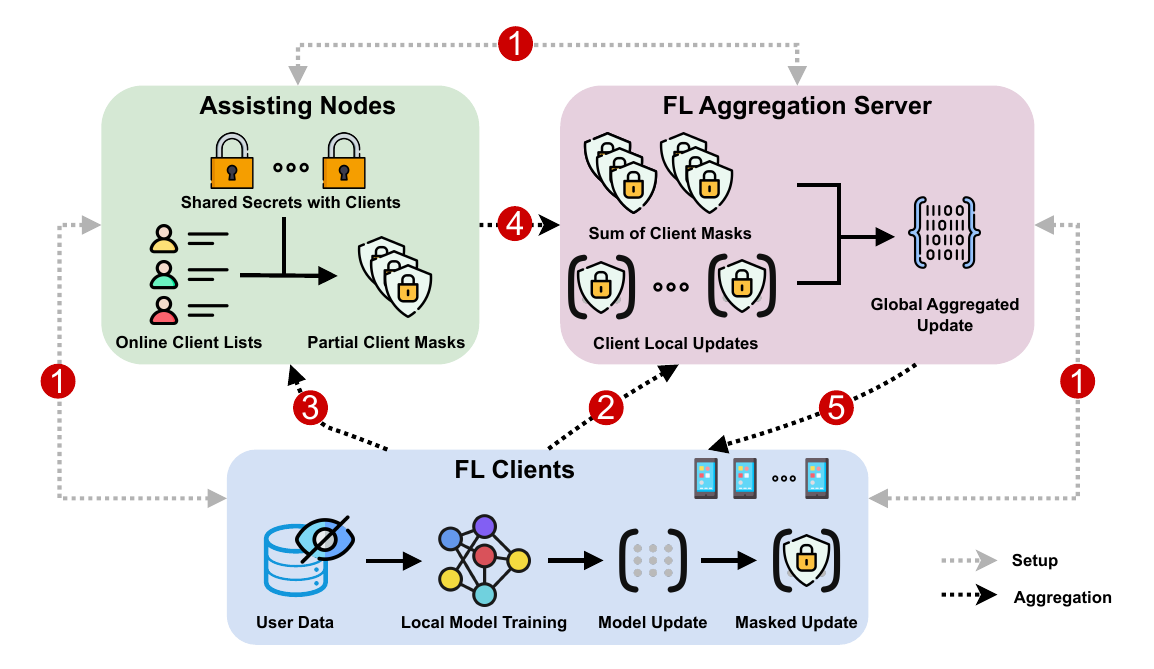}
  \caption{\sys's System Model}
  \label{fig:sys_model}
\end{figure}

We define a system model similar to models defined by previous work~\cite{behnia2023efficient,flamingo} (see  Fig.~\ref{fig:sys_model}). 
Our system consists of three types of participants: an aggregation server, a set of $n$ users with their local dataset collaborating to train a central model, and $k$ assisting nodes, which assist the aggregation server in unmasking the final model without leaking any individual gradient.
\sys offers a one-time setup (\ding{182}) for $T$ training iterations. After the setup,  each client trains the model on their local data, masks the local updates, and sends the masked updates to the aggregation server (\ding{183}) with a simple participation message broadcast to all $k$ assisting nodes (\ding{184}). After receiving all the masked updates, the server receives the aggregated masking terms from the $k$ assisting nodes (\ding{185}). It then aggregates all the provided information to obtain the unmasked global model (\ding{186}). 
To address the different privacy requirements, \sys~seamlessly adopts DP in different stages of training, depending on the target privacy requirements. We discuss each variation in Section~\ref{sec:sys}.

\subsection{Threat Model}~\label{sec:threat_model}

Attacks on FL systems predominantly aim to compromise the integrity and confidentiality of these 
systems~\cite{bonawitz2017practical,bell2020secure,flamingo}. These include traditional Man-in-the-middle (MITM) attacks and emerging attacks targeting the model to compromise the privacy of the training data.
Following ~\cite{flamingo}, we consider a malicious MITM attacker. This is the strongest adversary in the context of FL~\cite{flamingo,microfedml,bell2020secure}. More specifically, aside from the ability to analyze communication, such an adversary can actively force users to drop out, and drop or replace messages, 
compromising up to $(k-1)$ assisting nodes.
The adversary is assumed to have access to quantum computers capable of breaking conventional cryptographic problems. 
Consistent with ~\cite{flamingo}, we consider a secure and authenticated channel between the entities (e.g., via~\cite{paquin2020benchmarking}). For simplicity and following prior privacy-preserving approaches for the FL setting~\cite{flamingo,microfedml,behnia2023efficient}, we assume that the compromised parties will follow the protocol.

Attacks targeting the model to compromise data privacy, irrespective of the adversarial method utilized—whether black-box or white-box—can be classified, based on severity, into membership inference, model inversion, and training data extraction attacks. Such adversaries can attack user gradients and intermediate models during training and the final model after deployment~\cite{carlini2021extracting} by compromising different entities involved in the protocol. 
Unlike traditional centralized training, where an adversary typically has access only to the final model, FL involves multiple participants who may have different objectives and varying levels of trust. Since training occurs in a distributed setting, privacy concerns are not limited to the deployed model but also include the entire training process. As highlighted by~\cite{hayes2023bounding}, it is crucial to protect user data both during training and after deployment. To this end, we propose a full-stack threat model that addresses privacy risks at every stage of the training pipeline and after deployment. This includes attacks on user updates sent to the server, the intermediate global models shared with clients, and the final deployed model.
Note that extensive research has been conducted on model correctness and poisoning attacks~\cite{behnia2023efficient}, which can be utilized alongside our method to enhance client data privacy protection, hence outside the scope of our work.
Therefore, we define three threat models, categorized based on the adversary's capabilities and access to the model in different stages (i.e., user gradients, intermediate or final models). 
Following the above attack categories, \emph{the adversary succeeds if it can infer any information about the training data} (i.e., the membership inference attack).

\begin{itemize}
    \item \textbf{Threat Model 1:} TM1 considers an adversary targeting user gradients during the training phase to undermine the privacy of the honest users' data. In TM1, the adversary captures a compromised aggregation server that aims to undermine the clients' data privacy through its access to their masked gradient. A real-world example is a cloud provider running the federated learning server which, despite performing aggregation as expected, attempts to reconstruct sensitive information (e.g., handwritten digits or medical conditions) from user-submitted updates using gradient inversion attacks.
     \item \textbf{Threat Model 2:} TM2 considers an adversary targeting the intermediate model during the training phase. This threat model accounts for a compromised aggregation server or a subset of the clients (or both) that aim to undermine the clients' data privacy through their access to the intermediate model computed and distributed by the aggregation server at the end of each iteration. A practical scenario is a group of colluding clients in a cross-silo FL setup (e.g., hospitals sharing models for disease prediction) using model update differences across rounds to infer training data from other participants.

     \item \textbf{Threat Model 3:} TM3 considers privacy attacks after the training phase, once the model is deployed. As stated above, these attacks can occur via white-box and black-box access. 
     A real-world instance is a deployed language model accessed via an API (black-box) or downloaded in its entirety (white-box), where adversaries attempt membership inference or data reconstruction attacks to determine whether specific records (e.g., patient names or user queries) were part of the training data.

\end{itemize}
A protocol is considered to provide \emph{full-stack} privacy if it effectively safeguards user data across all the above threat models, ensuring comprehensive protection throughout the entire lifecycle of the model.
 


\subsection{Security Models}
In the following, we define the security models for our primitives. 
The standard security notion for a digital signature scheme is existential unforgeability against adaptive chosen message attacks (EU-CMA), defined below.

\begin{definition}[EU-CMA]\label{def:EUCMA}
Existential Unforgeability under Chosen Message Attack (EU-CMA) experiment $Expt^{\text{EU-CMA}}_{\Pi,A}$  for  a signature scheme $\Pi$:$(\sgnkeygen$, $\sgnsign$, $\sgnverify)$ with an adversary $A$ is defined as follows. 

\begin{itemize}
    \item $(\sk,\pk)\gets\Pi.\sgnkeygen(1^\kappa)$
    \item $(m^*,\sigma^*)\gets A^{\Pi.\sgnsign(\cdot)}(\pk)$
    \item If $1\gets\Pi.\sgnverify(\cdot)$ and $m^*$ was never queried to $\sgnverify(\cdot)$, return `success' otherwise, output `$\bot$'. 
    
\end{itemize}

\end{definition}
The main security notion for a key encapsulation mechanism is the indistinguishability of the chosen ciphertext attack (IND-CCA), which enables the adversary to access the decapsulation mechanism. 
\begin{definition}[IND-CCA]\label{def:INDCCA}
IND-CCA property of a key encapsulation scheme  $\kem=\{\sgnkeygen, \kemenc, \kemdec\}$ with key space $\mathcal{K}$ under chosen ciphertext attack (IND-CCA) experiment $Expt^{\text{IND-CCA}}_{\kem,A}$ with an adversary $A$ is defined as follows. 

\begin{itemize}
    \item $(\sk,\pk)\gets\Pi.\sgnkeygen(1^\kappa)$
     \item $b\gets \{0,1\},(c_{x},x_0)\gets\Pi.\kemenc(\pk)$
     \item $x_1\gets\mathcal{K}$
    \item $b'\gets A^{\Pi.\kemenc(\cdot),\Pi.\kemdec(\cdot)}(\pk,c_{x},x_b)$

\end{itemize}
The advantage of the adversary in the above experiment is defined as $\Pr[b=b']\leq \frac{1}{2} +\varepsilon$ for a negligible $\varepsilon$. 
\end{definition}

\section{Post-Quantum Federated Learning}\label{sec:sys}
Given a secure public key infrastructure, \sys~only requires a single setup round to train a central model (via multiple rounds). 
\sys~provides security against quantum adversaries while incurring the least computational overhead compared to its counterparts with conventional (non-quantum) security. 
Below, we present the high-level ideas of \sys,  followed by a detailed protocol.

\begin{figure*}[!ht]
  \centering
  \includegraphics[width=1\linewidth]{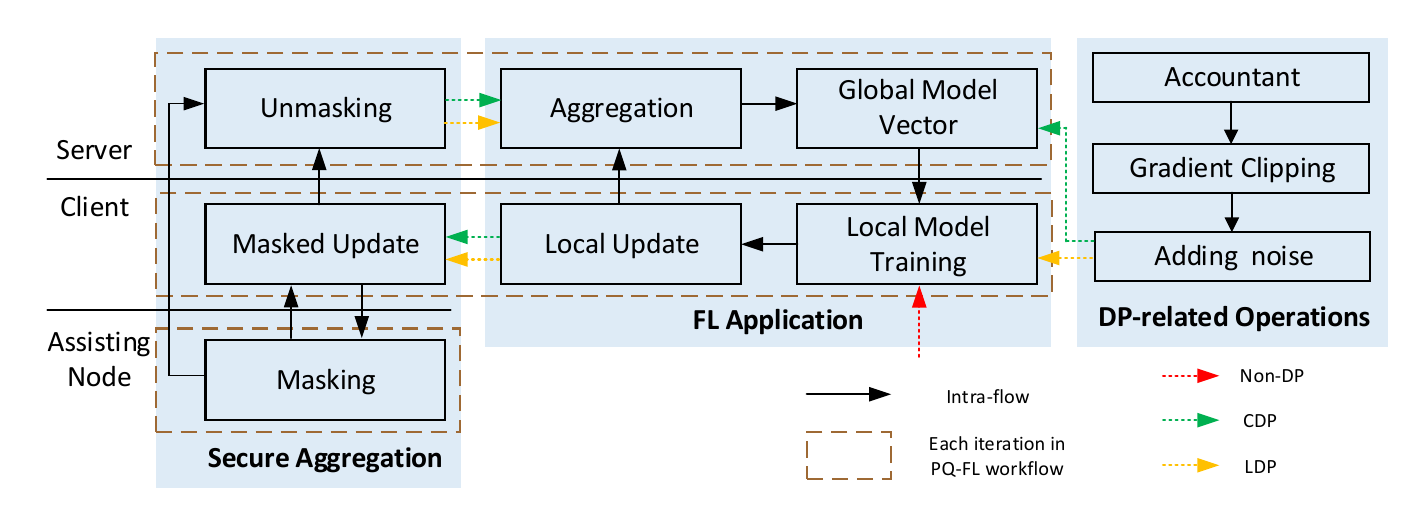}
  \caption{An overview of \sys's architecture and how it fits in the existing FL workflow.}
  \label{fig:workflow}
\end{figure*}


\subsection{High-level Idea}
The design of \sys~is based on two main observations: (1) The main overhead in privacy-preserving FL arises from the underlying cryptographic operations. This is further exacerbated when post-quantum security is considered. (2) The training takes place over a small number of iterations, typically between 20 and 50. 
With these insights, we design a new, highly efficient privacy-preserving FL framework by devising a series of precomputation methods to reduce the cryptographic overhead while attaining post-quantum security. 
%
%
Following the recent development in practical secure aggregation protocols 
(e.g.,~\cite{flamingo,behnia2023efficient}), \sys~assumes a set of assisting nodes to assist the aggregation server in unmasking the intermediate/final model. 
In addition, to ensure privacy against different threat models, we integrate \sys with different DP methods.
Fig. ~\ref{fig:workflow}  shows the high-level architecture of \sys and its integration with the FL training process. 

\subsection{FL Secure Aggregation with Post Quantum Security.}

\begin{algorithm*}[t]\caption{\sys~Setup}\label{alg:sysSetup}
\small
\raggedright{\textbf{Input:} All parties are provided with the security parameter $\kappa $, the number of assisting nodes $k$, a key encapsulation protocol $\kem=\{\sgnkeygen, \kemenc, \kemdec\}$,
 a digital signature scheme with precomputation $\Pi$: $(\sgnkeygen,\precomputesgn,\presgnsign,\sgnverify)$, instantiated using the security parameter $\kappa$. 

\textbf{Output:} All the entities generate their key exchange, signature  key pairs, as well as precomputed signing parameter list $\mathcal{LS}$.

All the users receive the public keys of all the  assisting nodes $\langle(\pk^{\node_1}_\Pi, \dots, \pk^{\node_k}_\Pi),(\pk^{\node_1}_\kem, \dots, \pk^{\node_k}_\kem)\rangle$, and shared secret seed $\x^{\user_i}_{\node_j}$ for each assisting node $\node_j$.
All the assisting nodes receive the public keys of all the  participating users $(\pk^{\user_1}_\Pi, \dots, \pk^{\user_n}_\Pi)$, and a shared secret seed $\x^{\node_j}_{\user_i}$ with each user $\user_i$.  
The aggregation server  \agg~receives the public keys of all the users and  assisting nodes $\langle(\pk^{\user_1}_\Pi, \dots, \pk^{\user_k}_\Pi),(\pk^{\node_1}_\Pi, \dots, \pk^{\node_k}_\Pi)\rangle$.} 
Lastly, all the users and assisting nodes generate the precomputed mask lists $\mathcal{MS}$. \\

\hrulefill \vspace{-1mm}

\textbf{Phase} 1 \textbf{(KeyGen and Advertise)}
\vspace{-2mm}

\hrulefill

All the communications below are conducted via an authenticated channel (similar to \cite{bonawitz2017practical,microfedml})
\begin{algorithmic}[1]

\item Each assisting node $\node_j$ generates its key pair(s) $(\sk^{\node_j}_\Pi,\pk^{\node_j}_\Pi)\gets\Pi.\sgnkeygen(\kappa)$ and$(\sk^{\node_j}_\kem,\pk^{\node_j}_\kem)\gets\kem.\kagen(\kappa)$  and sends $(\pk^{\node_j}_\Pi, \pk^{\node_j}_\kem)$  to the $n$ users and sends $\pk^{\node_j}_\Pi$ the aggregation server.  

\item Upon receiving $\pk^{\node_j}_\kem$ from $\node_j$,  each user $\user_i$ computes and encapsulate a shared secret by $(\x_{\node_j}^{\user_i}, c_{\node_j}^{\user_i})\gets\kem.\kemenc(\pk_\kem^{\node_j})$. It then generates its key pair(s) $(\sk^{\user_i}_\Pi,\pk^{\user_i}_\Pi)\gets\Pi.\sgnkeygen(\kappa)$  and sends  $\pk^{\user_i}_\Pi$ and   $c_{\node_j}^{\user_i}$ to $\node_j$ to the $k$ assisting nodes and $\pk^{\user_i}_\Pi$ to the aggregation server.





\item 
Upon receiving $c_{\node_j}^{\user_i}$ from the the user $\user_i$, the assisting node $\node_j$
computes and retrieves the shared secret  
$\x_{\node_j}^{\user_i} \gets \kem.\kemdec(c_{\node_j}^{\user_i}, \sk_\kem^{\node_j})$

\item The aggregation server \agg~generates a key pair $(\sk^\mathtt{S}_\Pi,\pk^\mathtt{S}_\Pi)\gets\Pi.\sgnkeygen(1^\kappa)$ and sends $\pk^\mathtt{S}_\Pi$ to the users.

\item All the entities perform the precomputation to get a list $\mathcal{LS}$ with $N$ groups of pre-computated parameters for optimizing further signing operations: $\mathcal{LS}_{\node_j} \gets \precomputesgn(\sk_{\node_j}, N)$, $\mathcal{LS}_{\user_i} \gets \precomputesgn(\sk_{\user_i}, N)$, $\mathcal{LS}_{\agg} \gets \precomputesgn(\sk_\agg, N)$.

\item Each user $\user_i$ computes a list $\mathcal{MS}_{\user_i}^{\node_j}$ for each assisting node $\node_j$, with $T$ groups of masks: $\mathcal{MS}_{\user_i}^{\node_j}$[$t$] = $\prf(\x_{\user_i}^{\node_j},t)$ for $t\in[1,\dots,T]$. Similarly, each assisting node $\node_j$ computes a list $\mathcal{MS}_{\node_j}^{\user_i}$: $\mathcal{MS}_{\node_j}^{\user_i}$[$t$] = $\prf(\x_{\node_j}^{\user_i},t)$ for $t\in[1,\dots,T]$.

\end{algorithmic}

\end{algorithm*}

\begin{algorithm*}[!t]\caption{\sys~Aggregation}\label{alg:sysAgg}
\small
\raggedright{\textbf{Input:} The iteration $t$, a user calculated model update $\vec{w}^{\user_i}_t$, a keyed pseudorandom function $\prf $, a digital signature scheme $\Pi$,   the secret seeds $\x$ (shared between the users and the nodes), the signature key pair of the users and the assisting nodes, as well as the precomputed signing parameter list $\mathcal{LS}$ of all entities and the precomputed mask list $\mathcal{MS}$ of the users and the assisting nodes.

\textbf{Output:} The final model update $\vec{w}_t \in \mathcal{M}^d$. \\

\hrulefill \vspace{-1mm}

\textbf{Phase} 1 \textbf{(Masking Updates)}}

\vspace{-2mm}

\hrulefill

\begin{algorithmic}[1]

\State The user compute the local  update $\Vec{w}^{\user_i}_{t}$ for iteration $t$
\State The user $\user_i$ 
first gets the mask $\mathcal{MS}_{\user_i}^{\node_j}$[$t$] for iteration $t$, and then computes $\Vec{a}^{\user_i}_{t}=\sum_{j=1}^k$$\mathcal{MS}_{\user_i}^{\node_j}$[$t$] 
and the masked update $\vec{y}^{\user_i}_{t}=\vec{w}^{\user_i}_{t}+\Vec{a}^{\user_i}_{t}$. 
%


\State The user sets $m=(t,\vec{y}^{\user_i}_{t})$ and $m'=(t)$, computes   signatures $\sigma^{\user_i}_{\agg} \gets \Pi.\presgnsign(\sk^{\user_i}_\Pi,m, \mathcal{LS}_{\user_i})$ and $\sigma^{\user_i}_{\node_j} \gets \Pi.\presgnsign(\sk^{\user_i}_\Pi,m', \mathcal{LS}_{\user_i})$ and sends $(m,\sigma^{\user_i}_{\agg})$ and $(m',\sigma^{\user_i}_{\node_j})$ to the aggregation server and the assisting node $\node_j $ (for $j\in[1,\dots,k]$), respectively. 



\end{algorithmic}

 \hrulefill \vspace{-1mm}

\textbf{Phase} 2 \textbf{(Aggregate Updates)}
\vspace{-2mm}

\hrulefill

\begin{algorithmic}[1]
\State Upon receiving $(m',\sigma^{\user_i}_{\node_j})$
from all the  users in the system, $\node_j$ checks if  $ \Pi.\sgnverify(\pk_\Pi^{\user_i},m',\sigma^{\user_i}_{\node_j} ) \stackrel{?}{=} 1$ holds, it adds the user to its user list $\mathcal{L}_{j,t}$. 

\State Each assisting node checks if  {$|\mathcal{L}_{j,t}|\geq \alpha\ppp_H$}, 
it gets $\mathcal{MS}_{\node_j}^{\user_i}$[$t$], computes $\Vec{a}^{\node_j}_{t}=\sum_{i=1}^{|\mathcal{L}_{j,t}|}$$\mathcal{MS}_{\node_j}^{\user_i}$[$t$],
sets $m^{\prime\prime}=(t,|\mathcal{L}_{j,t}|,\Vec{a}^{\node_j}_t)$ and $\sigma^{\node_i}_{\agg} \gets \Pi.\presgnsign(\sk^{\node_j}_\Pi,m'', \mathcal{LS}_{\node_j})$  and sends $(m'',\sigma^{\node_j}_{\agg})$ to the aggregation server \agg.

\State Upon receiving $(m,\sigma^{\user_i}_{\agg})$, \agg~first checks if 
$ \Pi.\sgnverify(\pk_\Pi^{\user_i},m,\sigma^{\user_i}_{\agg} )\stackrel{?}{=}  1$ holds, it adds $\user_i$ to its user list $\mathcal{L}_{\agg,t}$.

\State Next, \agg~checks if   $  \Pi.\sgnverify(\pk_\Pi^{\node_j},m'',\sigma^{\node_j}_{\agg} )\stackrel{?}{=}  1$ holds   for $j\in[1,\dots,k]$. It then checks if  $|\mathcal{L}_{\agg,t}|= |\mathcal{L}_{1,t}| =\dots = |\mathcal{L}_{k,t}|$, does not hold it aborts. 

\State \agg~computes and broadcasts the final update as $\vec{w}_t=\sum_{i=1}^{|\mathcal{L}_{\agg,t}|}\vec{y}^{\user_i}_{t}-\sum_{j=1}^k\Vec{a}^{\node_j}_t$.

 \end{algorithmic}
\end{algorithm*}

\setlength{\floatsep}{0.1cm}

\subsubsection{Setup Phase}
The setup phase is a one-time process that can be performed, at least partially, offline. 
This phase consists of a single round, namely, \textit{KeyGen and Advertise}. 
During this round, all users ${P_1, ..., P_n}$ and assisting nodes ${A_1, ..., A_k}$ are initialized with the two pairs of public keys, i.e., ($\sk_{\kem}$, $\pk_{\kem}$) for the key exchange scheme and ($\sk_{\Pi}$, $\pk_{\Pi}$) for the digital signature scheme. 
Then, a public-key exchange procedure is performed as follows: 
(1) the users receive copies of the key exchanging public keys from each assisting node $\pk_{\kem}^{A_j}$;
(2) the assisting nodes receive copies of the user signing public keys $\pk_{\Pi}^{P_i}$;
and (3) the aggregation server receives copies of the signing public keys from the assisting nodes $\pk_{\Pi}^{A_j}$ and the users $\pk_{\Pi}^{P_i}$. 
After that, each user computes a shared secret $x_{P_i}^{A_j}$ for each assisting node $A_j$ with $\sk_{\kem}^{P_i}$ and $\pk_{\kem}^{A_j}$ via the key exchange scheme $\kem.\kemenc(.)$ and securely sends it to the corresponding assisting node $A_j$.
The assisting node $A_j$ can recover the shared secret using its private key $\sk_{\kem}^{A_j}$ via $\kem.\kemdec(.)$.


\subsubsection{Aggregation Phase}
The aggregation phase (Algorithm 2) comprises two rounds: \textit{Masking Updates} and \textit{Aggregate Updates}.
The aggregation phase is repeated at run-time with several iterations based on the FL training requirements.

Specifically, at $t$-th iteration, in the first round (\textit{Masking Updates}), each user computes a masking vector $a_t^{P_i}$ using the shared secrets $x_{P_i}^{A_j}$ for $j = {1, ..., k}$ to mask its user gradient $w_t^{P_i}$. The user then sends the masked gradient $y_t^{P_i}$ and a participation message (e.g., the iteration number) to the aggregation server and the $k$ assisting nodes, respectively. All the messages are signed using $\Pi.\sgnsign(.)$.

In the second round (\textit{Aggregate Updates}), upon receiving (and verifying) the participation message (and signature), each $A_j$ adds the user to the list $\mathcal{L}{j,t}$. For all users in $\mathcal{L}{j,t}$, $A_j$ computes the aggregation of their masking vectors $a_t^{A_j}$ using the shared keys $x_{P_i}^{A_j}$ and sends $a_t^{A_j}$ to the aggregation server, in which the messages are signed by assisting nodes using $\Pi.\sgnsign(.)$.
Next, when receiving (and verifying) the participation message (and signature), the aggregation server $\mathcal{S}$ adds the user to the list $\mathcal{L}{s,t}$ and then verifies that all user lists from the assisting nodes and the aggregation server are identical. 
For all users in $\mathcal{L}{s,t}$, the server uses the masked updates ($y_t^{P_i}$ and the aggregated masking terms $a_t^{A_j}$) provided by the assisting nodes to efficiently compute 
the aggregated intermediate model.

\subsubsection{Precomputed Masks}
To avoid plain gradient transmission, \sys generates a mask for each client local update and only transmits the masked update.
On the user side, a mask is generated with the shared secret of the user and each assisting node $x_{P_i}^{A_j}$ and the iteration number $t$.
As the shared secrets are calculated in the setup phase and the iteration number is iterated in order, there is no any runtime information (i.e., local updates) involved.
Therefore, we can precompute $T$ masks for $T$ iterations before the runtime aggregation.

\subsection{Resiliency Against User and Assisting Node Dropouts}
The design of \sys offers strong resilience against both user and assisting node dropouts. Specifically, \sys is a one-round protocol. Unlike existing approaches~\cite{bonawitz2017practical, bell2020secure,microfedml}, user dropouts do not introduce any additional overhead for the remaining participants. As in standard non-private FL, user dropouts may affect model accuracy but do not impact the overall protocol execution. Therefore, \sys does not introduce any additional constraints in this regard.
Moreover, the novel design of \sys enables the use of a rotating set of assisting nodes. In scenarios where an assisting node has an unreliable connection or is at risk of dropping out, a simple secret sharing scheme can be used to distribute the node’s secret among the other assisting 
nodes~\cite{shamirSecret}. This schema allows the online assisting parties to collaboratively reconstruct the required masking terms without imposing any additional burden on the regular users.

\subsection{Integrating Differential Privacy}\label{sec:pqfl:dp}
During and after the FL aggregation, attackers under different threat models could exploit the client gradients, intermediate models, and final models to conduct black- and white-box-based model inversion attacks, as outlined in Section~\ref{sec:threat_model}. These attacks could allow one to infer information about or reconstruct the model's training data, making secure aggregation alone insufficient to fully protect user training data.

To tackle that issue, \sys integrates DP techniques with secure aggregation, adding noises to the final, intermediate, and local models during training.
Depending on the threat models, in which adversaries with different abilities are involved, we employ two DP methods--\textbf{Local Differential Privacy (LDP)} and \textbf{Central Differential Privacy (CDP)}-- individually or in combination to safeguard the client gradient privacy, intermediate model privacy, and final model privacy.
Specifically, LDP is used to protect client gradient updates. Each client applies DP noise during local model training at every iteration, preventing user training data from being exposed to a malicious server or other clients. Note that clients may use different privacy budgets ($\varepsilon$), with the overall privacy guarantee determined by the largest $\varepsilon$ among all clients.
CDP protects against model inversion attacks on aggregated models. The CDP server adds DP noise to the aggregated model at each training iteration, ensuring that malicious entities cannot infer the individual data of honest clients from the aggregated models.

\sys adopts tailored DP strategies with the two DP methods based on the specific threat model.
Under \textbf{TM1}, \sys applies LDP (as well as the secure aggregation) to protect the client gradients. By adding noise to client gradients during local client training, a compromised server will be unable to infer the exact training data after receiving the client gradients.
Under \textbf{TM2} and \textbf{TM3}, \sys utilize LDP and CDP to protect the intermediate models and final models against compromised clients and compromised server.
Noise is added to the client gradients on the client side and to the aggregated model on the server side, ensuring robust protection under these threat models.

\subsection{Efficient Post-Quantum Signature}~\label{sec:pqfl:sign}



\begin{algorithm}[t]




\raggedright{\textbf{$\mathcal{LS}\gets \precomputesgn(\sk_\Pi, N)$}}
\vspace{-2mm}

\hrulefill 
\begin{algorithmic}[1]

\State $\mathcal{LS} := \emptyset$ 
\State \textbf{while} \text{Length of} $\mathcal{LS} < N$ \textbf{do} 
\State \hspace{\algorithmicindent} $\mathbf{y} \leftarrow \mathbb{S}_{\gamma_1 - 1}^{\ell}$ 
\State \hspace{\algorithmicindent} $\mathbf{w}_1 := \mathsf{HighBits}(\mathbf{A}\mathbf{y}, 2\gamma_2)$ 
\State \hspace{\algorithmicindent} \text{Add} ($\mathbf{y}, \mathbf{w}_1$) \text{into} $\mathcal{LS}$ 
\State \textbf{return} $\mathcal{LS}$ 
\end{algorithmic}

\hrulefill \vspace{-1mm}

\textbf{$\sigma \gets \presgnsign(\sk_\Pi, m, \mathcal{LS})$}
\vspace{-2mm}

\hrulefill

\begin{algorithmic}[1]
\State \textbf{for each} ($\mathbf{y}, \mathbf{w}_1$) \textbf{in} $\mathcal{LS}$ 
\State \hspace{\algorithmicindent} $c \in \mathbb{B}_{60} := \mathcal{H}(m \parallel \mathbf{w}_1)$ 
\State \hspace{\algorithmicindent} $\mathbf{z} := \mathbf{y} + c\mathbf{s}_1$ 
\State \hspace{\algorithmicindent} \textbf{if} $\|\mathbf{z}\|_\infty \geq \gamma_1 - \beta$ 
\State \hspace{\algorithmicindent} \hspace{1em} \textbf{or} $\|\mathsf{LowBits}(\mathbf{A}\mathbf{y} - c\mathbf{s}_2, 2\gamma_2)\|_\infty \geq \gamma_2 - \beta$, 
\State \hspace{\algorithmicindent} \hspace{1em} \textbf{then} \text{Remove} ($\mathbf{y}, \mathbf{w}_1$) \text{from} $\mathcal{LS}$ 
\State \hspace{\algorithmicindent} \hspace{1em} \textbf{return} $\sigma = (\mathbf{z}, c)$ 
\State \textbf{return} \underline{$\sgnsign(\sk_\Pi, m)$} 
\end{algorithmic}
\caption{Precomputation Algorithm for Dilithium}\label{alg:dili}
\end{algorithm}

Digital signatures are essential for ensuring the authenticity and integrity of transmitted messages. While post-quantum signatures promise long-term security against quantum adversaries, they are often more resource-intensive than their classical counterparts. This overhead becomes significant when these schemes are deployed on resource and battery-constrained devices like cellular phones. 
Several algorithms have been proposed in the NIST post-quantum standards, such as Dilithium~\cite{dilithium} and SPHINCS+~\cite{bernstein2015sphincs}. We select Dilithium as the most efficient option~\cite{li2024enhancing}, and optimize it for deployment in \sys. In the following, we describe Dilithium and introduce a precomputation method to significantly improve its computational overhead in the FL settings. 

Dilithium is based on Fiat-Shamir with abort paradigm, and its security relies on the hardness of the modulo-SIS and modulo-LWE problems. 
In the standard Dilithium signature without any optimization (as Algorithm~\ref{alg:dilithium} in Appendix~\ref{sec:app:algo}), 
key generation involves selecting matrix $\textbf{A}$ of dimension $k\times l$ and computing an MLWE public key $\textbf{t}=\textbf{A}s_1+s_2$ where $s_1$ and $s_2$ are sampled from the key space  $\mathcal{S}_\eta$ with small coefficient of size at most $\eta$. The $\mathsf{HighBits}(\cdot)$ and 
$\mathsf{LowBits}(\cdot)$ methods simply select the high-order and low-order bits of the coefficients in their vector, respectively. 

Given the Fiat-Shamir paradigm, in lattice-base settings, since we are solving for a variant of the SIS problem, the signature $\mathbf{z}$ should be small. To ensure this, an eligible signature should meet two conditions: (i) For security, it is crucial that the masking term $\mathbf{y}$ is completely hiding $c\mathbf{s}_1$;  the first rejection condition (Step 7) addresses this concern. (ii) The second rejection condition is required for both the scheme's correctness and 
security~\cite{behnia2021removing} and ensures that the same $c$ is recovered in the verification algorithm. 
In practice, this could require repeating the signing algorithm $~7$ times before finding an eligible signature~\cite{dilithium,behnia2021removing}. This can have a significant resource and computation overhead for constrained devices.

To address such an issue, we propose a precomputation algorithm for Dilithium by leveraging the fact that FL takes a small number of training iterations (in the order of hundreds). The high-level idea of the precomputation algorithm, presented in 
Algorithm~\ref{alg:dili}, is to store a list of masking values $\mathbf{y}$ and their corresponding commitment values $\mathbf{Ay}$. Therefore, if the selected masking term does not pass the rejection conditions during the signing algorithm, another value and its corresponding commitment can be selected from the list. Following the one-time nature of the masking term, it will be removed from the list once an eligible masking term is selected. Detailed algorithms are provided in Appendix~\ref{sec:app:algo}.

\subsection{Security Analysis}

\begin{theorem}\label{thm:malicious}
    The \sys~protocol presented in Algorithms \ref{alg:sysSetup} and \ref{alg:sysAgg}, running with $n$ parties $\{ \user_1,\dots, \user_n\}$, $k$ assisting nodes $\{\node_1,\dots,\node_k\}$,  and an aggregation server \agg~provides  post-quantum privacy against  a malicious adversary $A \in {\textbf{A1, A2, A3}}$~which controls \agg~and  $1-\alpha$ fraction of  users and $k-1$  assisting nodes with the offline rate $(1-\delta)\geq\alpha$.   
\end{theorem}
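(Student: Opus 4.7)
The plan is to prove Theorem~\ref{thm:malicious} via a simulation-based argument following the blueprint of Bell et al.~\cite{Bell2020}: construct a PPT simulator $\mathsf{Sim}$ that, given only oracle access to the $\alpha$-summation functionality $\mathcal{F}_{\vec{x},\alpha}$ of Definition~\ref{def:alphasummation}, produces a view of the adversary $A$ which is computationally indistinguishable from its view in the real execution. Because $A$ corrupts at most $k-1$ assisting nodes, there is always at least one honest assisting node $\node_{j^*}$; this single honest node will be the anchor of every hybrid, exactly as the honest decryptor is in Flamingo and \eseafl. Post-quantum security follows by instantiating each underlying primitive (the KEM, the Dilithium-based signature $\Pi$, and the PRF) with a post-quantum variant whose security reduction holds against quantum adversaries, so the MSIS and MLWE problems of Definition~\ref{def:msis} underpin the final bound.

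I would organize the argument as a sequence of hybrids, each replacing one primitive with an ideal counterpart. \textbf{Hybrid 0} is the real experiment. \textbf{Hybrid 1} replaces every encapsulated key $\x_{\node_{j^*}}^{\user_i}$ shared between $\node_{j^*}$ and an honest user with a uniformly random string in the KEM key space; indistinguishability follows from the IND-CCA property (Definition~\ref{def:INDCCA}) of $\kem$ applied independently to each honest user. \textbf{Hybrid 2} replaces the outputs $\prf(\x_{\node_{j^*}}^{\user_i},t)$ that populate the precomputed mask lists $\mathcal{MS}$ with truly uniform vectors over $\mathbb{Z}_p^d$; indistinguishability follows from PRF security using the now-uniform seeds. \textbf{Hybrid 3} aborts whenever $A$ delivers a valid signature on a message never signed by an honest user, node, or the server; by EU-CMA of $\Pi$ (Definition~\ref{def:EUCMA}), the abort probability is negligible, and this step is what lets the simulator treat the set of accepted participants as fixed once the honest parties have signed. \textbf{Hybrid 4} is the simulated world: $\mathsf{Sim}$ calls $\mathcal{F}_{\vec{x},\alpha}$ on the set $\mathcal{L}_{\agg,t}\cap H$ of honest contributors it can read from $A$'s instructions, obtains the honest-user sum $\vec{S}_t$, samples uniform masks $\Vec{y}^{\user_i}_t$ for the honest users subject to $\sum_{\user_i\in H} \Vec{y}^{\user_i}_t - \Vec{a}^{\node_{j^*}}_t = \vec{S}_t - \sum_{\user_i\in H}\Vec{a}^{\user_i\to\text{corrupt}\,\node}_t$, and programs $\Vec{a}^{\node_{j^*}}_t$ accordingly. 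Uniformity of the masks from Hybrid 2 makes each $\Vec{y}^{\user_i}_t$ marginally uniform and the joint distribution identical to the real one conditioned on the honest sum.

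The main obstacle is handling \emph{inconsistent views} that a malicious \agg~or malicious assisting nodes can attempt to induce by dropping, duplicating, or replaying user messages so that the lists $\mathcal{L}_{1,t},\dots,\mathcal{L}_{k,t},\mathcal{L}_{\agg,t}$ no longer coincide. If the adversary could, for instance, exclude an honest user from some $\mathcal{L}_{j,t}$ but include her at \agg, the masks from $\node_{j^*}$ would no longer cancel, and the server's aggregate would leak a residual PRF value. I would close this by invoking two facts: (i) each user's masked update and each node's reply are signed with a key whose forgery probability is negligible by EU-CMA, so the views that honest parties cryptographically commit to cannot be altered; and (ii) Step~4 of Phase~2 in Algorithm~\ref{alg:sysAgg} explicitly aborts if $|\mathcal{L}_{\agg,t}|\neq|\mathcal{L}_{j,t}|$ for any $j$, forcing the adversary to either deliver identical lists or cause an abort. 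Combined with the dropout precondition $(1-\delta)\geq\alpha$, which guarantees $|\mathcal{L}_{\agg,t}|\geq\alpha|H|$ so that $\mathcal{F}_{\vec{x},\alpha}$ returns a nontrivial sum rather than $\bot$, this ensures the simulator's output set is exactly the set whose honest-sum $\mathsf{Sim}$ learns.

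Collecting the hybrids, the total distinguishing advantage is bounded by $q_{\kem}\cdot\mathsf{Adv}^{\mathrm{IND\text{-}CCA}}_\kem + q_{\prf}\cdot\mathsf{Adv}^{\mathrm{PRF}} + q_\Pi\cdot\mathsf{Adv}^{\mathrm{EU\text{-}CMA}}_\Pi$, each of which reduces to MSIS/MLWE (for $\Pi$ and the lattice-based KEM) and to the PRF's post-quantum pseudorandomness, all of which remain negligible against quantum adversaries under standard assumptions. This establishes that $A$'s view in the real protocol is indistinguishable from $\mathsf{Sim}^{\mathcal{F}_{\vec{x},\alpha}}$, which is precisely the definition of privacy in the malicious, dropout-tolerant, post-quantum setting asserted by the theorem.
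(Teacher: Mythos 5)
Your proposal follows essentially the same simulation-based hybrid argument as the paper's proof: both replace the KEM-derived shared seeds via IND-CCA, randomize the honest users' masked updates by relying on the single guaranteed honest assisting node, invoke EU-CMA security of $\Pi$ for the signed messages, and conclude by substituting the $\alpha$-summation ideal functionality for the aggregate. Your version is somewhat more explicit in places where the paper is terse---you add a separate PRF-to-uniform hybrid and an explicit forgery-abort step (the paper folds these into a random-oracle modeling of the masks) and you state a concrete advantage bound---but the decomposition and the key ideas coincide.
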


\begin{proof}

Following~\cite{behnia2023efficient}, we prove the above theorem by the standard hybrid argument. The proof relies on the assumption of the simulator \Sim~that controls the environment. $\ppp_h$ and $\ppp_C$ define the set of honest and corrupt users, respectively.

\smallskip
\noindent

\emph{Setup phase:}
\begin{enumerate}
    \item Each honest user $\user_i$ and assisting node $\node_j$ follows the protocol in Algorithm \ref{alg:sysSetup}.
    \item For each  corrupt user $\user_{i'}\in\ppp_{C}$ and honest assisting node  $\node_j\in\aaa_H$,  compute  and store  $(\x_{\node_j}^{\user_{i'}}, c_{\node_j}^{\user_{i'}})\gets\kem.\kemenc(\pk_\kem^{\node_j})$.

    \item For each  honest user $\user_{i}\in\ppp_H$ and corrupt assisting node  $\node_{j'}\in\aaa_C$,  compute and store   $(\x_{\node_{j'}}^{\user_{i}}, c_{\node_{j'}}^{\user_{i}})\gets\kem.\kemenc(\pk_\kem^{\node_{j'}})$.
    \item For each honest pair of user $\user_i$ and honest assisting node $\node_j$, \Sim~ picks $\x_r\Ra\mathcal{K}_\Sigma$ and sets $   \x^{\user_i}_{\node_j}= \x_r$.
\end{enumerate}

\smallskip
\noindent
\emph{Aggregation phase:}
  
\begin{enumerate}
    \item In each iteration $t$ of the protocol, each honest user picks a random $\Vec{y'}_t^{\user_i}$. It then computes two signatures  $\sigma^{\user_i}_{\agg} \gets \Pi.\sgnsign(\sk^{\user_i}_\Pi,m)$ and $\sigma^{\user_i}_{\node_j} \gets \Pi.\sgnsign(\sk^{\user_i}_\Pi,m')$  and sends $(m{,\sigma^{\user_i}_{\agg}})$ and $(m'{,\sigma^{\user_i}_{\node_j}})$ to the aggregation server and the assisting node $\node_j $ (for $j\in[1,\dots,k]$), respectively. 
    \item The aggregation server  \agg~first adds the user $\user_i$ to the list $\Ulist_{\agg,t}$ and then calls the $\alpha$-summation ideal functionality $\mathcal{F}_{\Vec{x},\alpha}(\Ulist_{\agg,t}\backslash \mathcal{P}_{C})$ (where $\mathcal{P}_{C}$ is the set of corrupt users) to get $\Vec{w}_t$.

\item Next, the simulator samples $\Vec{w'}_t^{\user_i} \Ra \mathcal{M}^d$ for all $\user_i \in \Ulist_{\agg,t}\backslash \mathcal{P}_{C} $ such that $\Vec{w}_t=\sum_{i\in\mathcal{P}_{C}}\Vec{w'}_t^{\user_i} $ and computes $\Vec{a'}_{t}^{\user_i}=\Vec{y'}_t^{\user_i} - \Vec{w'}_t^{\user_i}$.
\Sim~sets $\{\Vec{a'}_{\node_j,t}^{\user_i}\}_{\node_j \in \aaa_H} = \{\RO(\x_{\node_j}^{\user_i},t)\}_{\node_j\in\aaa_H}$ such that $\Vec{a'}_{t}^{\user_i}=\sum_{\node_j \in \aaa_H} \{\Vec{a'}_{\node_j,t}^{\user_i}\}$, and for each $\node_j\in\aaa_H$, it computes  $ \Vec{a'}_{\user_i,t}^{\node_j}=\sum_{\user_i\in\ppp_H}\Vec{a'}_{j,t}^{\user_i}$. 

\end{enumerate}

The hybrids are provided below. We note that each hybrid represents a view of the system seen by the adversary. The proof relies on the indistinguishability of each hybrid.  The hybrids are constructed by the simulator \Sim. 

\begin{description} 

\item [$\mathtt{Hyb0}$] The random variable is identical to the real execution of the protocol (i.e., \real). 
\item [$\mathtt{Hyb1}$] Now, \Sim~that knows the secrets of all honest entities are introduced in this hybrid. The distribution of this hybrid remains identical to the one above. 


\item [$\mathtt{Hyb2}$] This hybrid replaces the shared keys between the users and assisting nodes with a random shared secret key sampled from  $\mathcal{K}_\mathtt{\Sigma}$.  The \indist of this hybrid is delivered by the security of the key encapsulation mechanism (Definition~\ref{def:INDCCA}). Our protocol delivers this by the security of the CRYSTALS-Kyber 
algorithm~\cite{bos2018crystals}, which is based on the Module-LWE problem which is post-quantum secure. 


\item [$\mathtt{Hyb3}$] This hybrid starts with each honest user picking a random vector $\Vec{y'}^{\user_i}_t$. The user then computes a signature $\sigma^{\user_i}_{\agg} \gets \Pi.\sgnsign(\sk^{\user_i}_\Pi,<\Vec{y'}^{\user_i}_t,t>)$ and sends $(\Vec{y'}^{\user_i}_t,\sigma^{\user_i}_{\agg})$ to the server.  The \indist of this hybrid is due to the following. 1) Given we need at least one assisting node to remain honest, and since $A$ does not have the secret of honest entities, in \real, the masked gradient vector will have the same distribution as $\Vec{y'}^{\user_i}_t$ and therefore indistinguishable. 2) The  \indist of $\sigma^{\user_i}_{\agg}$ is delivered via the security requirement (Definition \ref{def:EUCMA}) of the underlying signature scheme. 


\item [$\mathtt{Hyb4}$] The ideal functionality $\mathcal{F}_{\Vec{x},\alpha}(\Ulist_{\agg,t}\backslash \mathcal{P}_{C})$ and random oracles are used to substitute the aggregated mask outputted by the honest assisting nodes  $\node_i\in\aaa_H$ with  $ \Vec{a'}_{\user_i,t}^{\node_j}$ (computed in Step 3 of the simulated Aggregation phase presented above). This hybrid outputs $( \Vec{a'}_{\user_i,t}^{\node_j},\sigma^{\node_i}_{\agg})$ where $\sigma^{\node_i}_{\agg} \gets \sgnsign(\sk^{\node_j}_\Pi,m'')$. Given the ideal functionality, random oracles, and digital signatures, and since $A$ does not have the secret of honest entities, the view of this hybrid is \indist with the previous hybrid. 


\item [$\mathtt{Hyb5}$] In the final hybrid, \Sim~outputs the output of the ideal functionality  (Step 2 in the simulated Aggregation phase)  as the global update $\vec{w}_t$. Given the assumption of the fraction of honest users/nodes, the ideal functionality will not output $\bot$ with an overwhelming probability. Thus, this hybrid is \indist from the previous one. 

\end{description}
In the above, we have shown that the view of all the corrupted parties controlled by $A$~is computationally indistinguishable. This ensures that the masked gradients are essentially random values, providing full privacy and thereby protecting the privacy of the clients' dataset in the sense of TM1.   
 
\end{proof}

\begin{corollary}
The protocol presented in Algorithm \ref{alg:sysAgg} protects privacy against a malicious adversary capable of controlling communication and arbitrarily dropping users.
\end{corollary}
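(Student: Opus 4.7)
The plan is to derive the corollary directly from Theorem~\ref{thm:malicious} by arguing that neither adversarial capability---arbitrary control of the communication channel nor arbitrary dropping of users---enlarges the adversary's view beyond what the simulator $\Sim$ already produces in the hybrid argument. Concretely, I would treat these behaviors as particular strategies of the same malicious $A$ considered in the theorem and show that each such strategy collapses into a case already covered by one of the hybrids $\mathtt{Hyb1}$--$\mathtt{Hyb5}$.

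For arbitrary user dropouts, I would first observe that Algorithm~\ref{alg:sysAgg} consumes per-iteration inputs only from the users appearing in the lists $\Ulist_{j,t}$ and $\Ulist_{\agg,t}$, and the equality check $|\Ulist_{\agg,t}|=|\Ulist_{1,t}|=\cdots=|\Ulist_{k,t}|$ together with the $\alpha\ppp_H$ threshold at each assisting node guarantees that either a consistent honest subset of sufficient size participates or the round aborts. Consequently, an adversary who adaptively drops honest users merely chooses the partition $Q_L$ on which $\mathcal{F}_{\vec{x},\alpha}$ is evaluated in $\mathtt{Hyb4}$; when the resulting honest subset falls below $\alpha\ppp_H$, $\mathcal{F}_{\vec{x},\alpha}$ returns $\bot$ and no aggregate is released, matching exactly the real-execution abort and preserving indistinguishability in $\mathtt{Hyb5}$.

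For control of the communication channel, I would invoke the authenticated-channel assumption together with the EU-CMA security of $\Pi$ (Definition~\ref{def:EUCMA}). Any message that is injected, replayed, or mutated by $A$ and not produced by the corresponding honest party will fail $\Pi.\sgnverify$ except with negligible probability, and the recipient will discard it; this is operationally identical to the sender being dropped for that round, which is already subsumed by the dropout argument above. Messages genuinely generated by honest parties remain indistinguishable from the simulated ones through $\mathtt{Hyb2}$--$\mathtt{Hyb4}$, which rely only on the IND-CCA security of $\kem$ (Definition~\ref{def:INDCCA}), the EU-CMA security of $\Pi$, and the random oracle model, all of which hold against quantum adversaries.

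The main obstacle I anticipate is handling \emph{adaptive, rushing} behavior: the adversary may wait to observe masked updates $\vec{y}^{\user_i}_t$ or aggregated masks $\vec{a}^{\node_j}_t$ before deciding whose message to drop, delay, or attempt to forge within the same round. I would resolve this by noting that in every hybrid the simulator samples the honest users' $\vec{y'}^{\user_i}_t$ and the honest nodes' $\vec{a'}^{\node_j}_{\user_i,t}$ independently of the real gradients before any interaction with $A$, so any adaptive scheduling strategy commutes with the simulation; the only adversarial advantage that accrues from rushing is therefore bounded by the advantages against IND-CCA of $\kem$ and EU-CMA of $\Pi$, both negligible in $\kappa$, which completes the reduction.
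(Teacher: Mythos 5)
Your proposal is correct and follows essentially the same route as the paper: the corollary is derived from Theorem~\ref{thm:malicious} by observing that adversarial dropouts merely select the subset on which the $\alpha$-summation ideal functionality is evaluated, with the per-node threshold check $|\mathcal{L}_{j,t}|\geq \alpha\ppp_H$ forcing an abort (output $\bot$) whenever too few honest users remain. Your explicit reduction of message injection, replay, and modification to the dropout case via EU-CMA of $\Pi$, and your treatment of rushing adversaries, are elaborations the paper leaves implicit---its own proof addresses only the dropout half---so your write-up is, if anything, more complete than the original.
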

\begin{proof}
 The privacy guarantee of \sys~against user dropouts is provided by the $\alpha$-summation ideal functionality (Definition \ref{def:alphasummation}) and enforced by a condition check during the Aggregation phase (Algorithm \ref{alg:sysAgg}). More specifically, to prevent privacy leakages, in the protocol each assisting node checks if at least $\alpha\mathcal{P}_H$ users are participating in the training phase. 
This ensures that the aggregation process continues only when a sufficient number of users are contributing, preventing the adversary from inferring individual data through selective dropouts.
 
\end{proof}

\begin{corollary}
Given a malicious user, the protocol presented in Algorithm~\ref{alg:sysAgg} combined with the CDP method provides $(\epsilon_c,\delta_c)$-privacy in the context of TM2, and TM3.
\end{corollary}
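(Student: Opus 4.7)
The plan is to establish $(\epsilon_c,\delta_c)$-DP by treating the per-iteration noisy aggregate released by \agg~as a Gaussian (or Laplace) mechanism applied to the honest users' summed updates, and then invoking composition across the $T$ training iterations plus post-processing for the final model. First, I would define, for each iteration $t$, the central randomized mechanism $\mathcal{M}_t$ that maps the collection of honest users' local datasets to the released intermediate model $\vec{w}_t + \vec{\eta}_t$, where $\vec{\eta}_t$ is noise sampled independently by \agg~after aggregation. By Theorem~\ref{thm:malicious}, the view of any coalition controlling the malicious user (together with up to $k-1$ assisting nodes and \agg) reduces to exactly this aggregated sum plus the adversary's own inputs; in particular the secure aggregation layer reveals no individual honest update, so the DP analysis may be performed directly on $\mathcal{M}_t$.

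Next, I would bound the sensitivity of the aggregation function on adjacent datasets. Since each honest client clips its local update to norm at most $C$ before masking (standard in DP-SGD style training), the $\ell_2$-sensitivity of the summation over honest users is $C$. A standard application of the Gaussian mechanism (Definition~\ref{def:dp}) then yields that each $\mathcal{M}_t$ is $(\epsilon_t,\delta_t)$-DP for an appropriately calibrated noise scale $\sigma_t$. For \textbf{TM2}, the adversary observes the full sequence $(\mathcal{M}_1,\dots,\mathcal{M}_T)$; I would then invoke an advanced composition theorem, specifically the moments accountant of Abadi et al.~\cite{AbadiCGMMT016} (or equivalently a Rényi DP composition), to combine the per-iteration budgets into a single $(\epsilon_c,\delta_c)$ guarantee. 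For \textbf{TM3}, the deployed final model is a deterministic function of the released intermediate models, and the post-processing invariance of differential privacy immediately extends the same $(\epsilon_c,\delta_c)$ bound to any black-box or white-box adversary acting on the final model.

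The main obstacle will be handling the malicious user cleanly inside the DP analysis. We must argue two things: (i) the malicious user's adversarially chosen update $\vec{w}^{\user_{i'}}_t$ does not enter the sensitivity calculation, because DP is defined with respect to a change in a single \emph{honest} user's dataset and the malicious contribution can be absorbed into the post-processing by conditioning on the adversary's transcript; and (ii) the Gaussian noise $\vec{\eta}_t$ is drawn by \agg~independently of all received ciphertexts and added \emph{after} unmasking, so the adversary cannot engineer its input to cancel or bias the noise. A secondary subtlety is the choice of per-iteration $\sigma_t$ needed to make the composed budget equal the claimed $(\epsilon_c,\delta_c)$; the cleanest route is to defer this calibration to the moments accountant and state the parameter choice as a corollary of its standard bound.
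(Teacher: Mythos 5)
Your overall route matches the paper's: reduce the adversary's view to the noisy aggregate via the secure-aggregation guarantee of Theorem~\ref{thm:malicious}, treat the server-side noise addition as a central DP mechanism, and obtain TM3 by post-processing. The paper's own proof is far more terse --- it simply asserts that CDP on the server side yields $(\epsilon_c,\delta_c)$-protection against malicious clients and defines $(\epsilon_c,\delta_c)$ as the output of a server-side privacy accountant, which is exactly the calibration step you defer to the moments accountant in your final sentence. Your added detail (clipping to bound $\ell_2$-sensitivity, per-iteration Gaussian mechanism, composition over $T$ rounds, and the observation that the malicious user's own update is absorbed into post-processing rather than the sensitivity bound) is all sound and goes beyond what the paper writes down.

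There is, however, one concrete inconsistency you should fix: you place \agg~inside the adversarial coalition when invoking Theorem~\ref{thm:malicious} (``the malicious user together with up to $k-1$ assisting nodes and \agg''). For this corollary that coalition is too strong. The CDP noise $\vec{\eta}_t$ is sampled by \agg~\emph{after} unmasking, so a corrupted \agg~observes the plain aggregate $\vec{w}_t$ before any noise is added, and the DP analysis of $\mathcal{M}_t$ gives no guarantee against that view. The paper makes this caveat explicit: CDP here assumes an honest aggregation server, and the corollary's hypothesis (``given a malicious user'') restricts the adversary to clients. Your argument goes through verbatim once you restrict the coalition to malicious clients (and optionally assisting nodes), using Theorem~\ref{thm:malicious} only to argue that such a coalition learns nothing beyond the released noisy aggregates; the malicious-server case is instead handled by the LDP corollary (Corollary~\ref{col:TM2}).
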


\begin{proof}
Following the proof of 
Theorem~\ref{thm:malicious}, \sys~protects the privacy of user gradients and allows the aggregation server to compute the intermediate (or final) model without revealing any information about the gradients. The Central Differential Privacy (CDP) method is applied on the server side, ensuring that the intermediate (or final) model is $(\epsilon_c, \delta_c)$-protected against malicious clients during training and deployment.

However, it is essential to note that since CDP is implemented on the server, a malicious server can access the plain intermediate (or final) model before CDP is applied. Consequently, while CDP provides $(\epsilon_c, \delta_c)$-privacy from malicious clients during training and deployment, it still assumes an honest aggregation server. $\epsilon_c $ and $\delta_c$ define the central privacy budget and the failure probability computed by a privacy accountant on the server side.  
\end{proof}

\begin{corollary}\label{col:TM2}
Given a malicious aggregation server, the protocol presented in Algorithm \ref{alg:sysAgg} combined with the LDP method provides $(\epsilon_l,\delta_l)$-privacy in the context of TM1, TM2, and TM3.
\end{corollary}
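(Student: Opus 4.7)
The plan is to leverage the secure aggregation guarantee of Theorem~\ref{thm:malicious} together with the post-processing invariance of differential privacy. First, I would observe that under LDP each honest client $\user_i$ perturbs its local gradient $\vec{w}^{\user_i}_t$ with calibrated noise before masking and transmission, so the noisy gradient $\tilde{\vec{w}}^{\user_i}_t$ is $(\epsilon_l,\delta_l)$-DP with respect to $\user_i$'s local dataset by construction, where $\epsilon_l$ and $\delta_l$ are produced by a client-side privacy accountant over all $T$ iterations. Crucially, this per-client guarantee is pointwise: it holds regardless of what the server or other parties do, since the noise is injected at the source.

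Next, I would invoke Theorem~\ref{thm:malicious}, which establishes that the joint view of a malicious aggregation server --- even one colluding with up to $k-1$ assisting nodes and a $1-\alpha$ fraction of users --- is computationally indistinguishable from a simulated view produced from the $\alpha$-summation ideal functionality $\mathcal{F}_{\vec{x},\alpha}$. Hence, beyond what is already implied by the corrupt parties' own inputs and randomness, the only information about the honest clients that ever reaches the adversary is, in each round, the aggregate $\sum_{\user_i \in \ppp_H \cap \Ulist_{\agg,t}} \tilde{\vec{w}}^{\user_i}_t$ of the noisy gradients.

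I would then close the argument using the post-processing property of DP: because each honest client's input to the ideal functionality is already $(\epsilon_l,\delta_l)$-DP, and the downstream quantities --- per-round aggregates, intermediate models, and the final deployed model --- are all (possibly adversarially chosen but) data-independent functions of these LDP outputs, the $(\epsilon_l,\delta_l)$-DP guarantee transfers to every object observable under TM1 (masked gradients), TM2 (intermediate models), and TM3 (the final model). Since the noise is applied client-side, this conclusion holds uniformly against a malicious aggregation server, which is precisely the setting of the corollary.

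The main obstacle, and where I would spend the most care, will be handling composition across the $T$ iterations in which each client participates repeatedly: a naive round-by-round composition would blow the advertised $(\epsilon_l,\delta_l)$ up by a factor growing with $T$, so I would calibrate the per-round noise using the moments accountant of Abadi et al.~\cite{AbadiCGMMT016}, as is standard in DP-SGD, so that the aggregated budget over all rounds matches the advertised $(\epsilon_l,\delta_l)$. A secondary subtlety is that a malicious server may adaptively force dropouts or replace messages, but since LDP is a pointwise per-client guarantee, adversarial scheduling affects only which honest noisy gradients are aggregated, not the privacy of any specific honest client, so the bound is preserved.
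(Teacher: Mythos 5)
Your proof is correct and follows essentially the same route as the paper's: LDP noise is injected at the client before masking, so the guarantee holds pointwise against a malicious server, and everything downstream (intermediate and final models, hence TM1--TM3) inherits it by post-processing, with Theorem~\ref{thm:malicious} covering concealment of the masked gradients themselves. The paper's own proof is only a brief sketch of this argument --- it does not spell out post-processing invariance, cross-round composition, or adversarial dropouts as you do --- and it defines $(\epsilon_l,\delta_l)$ as the largest privacy budget and failure probability among the participating clients, a detail your write-up should retain alongside your accountant-based calibration.
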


\begin{proof}
    Compared to the previous one, the key distinction in this corollary lies in the use of Local Differential Privacy (LDP) alongside \sys~and the removal of the trusted server assumption. Specifically, in the LDP setting, obscuring user gradients occurs on the client side, ensuring that the intermediate and final models computed by the aggregation server maintain $(\epsilon_l,\delta_l)$-privacy at all times. It is important to note that \sys~fully conceals user gradients. Here, $\epsilon_l$ and $\delta_l$ represent the largest privacy budget and the corresponding probability of failure among the participating clients.
\end{proof}

\section{Performance Evaluation }

\subsection{Analytical Evaluation}\label{sec:analytical}
We present a comparative analysis of the computational performance of \sys and several state-of-the-art protocols in Table~\ref{tab:analytical}. 

\subsubsection{Setup Phase}
The setup phase of \sys involves two runs of the key generation protocol and $k$ runs of the key agreement protocol for the user, or $|U|$ runs for the assisting node. 
Additionally, each user and assisting node perform $T \times k$ and $T \times |U|$ $\PRF$ invocations, respectively, to precompute $T$ groups of masks, along with $N$ $\precomputesgn$ operations to precompute $N$ groups of signing parameters.

\sys offers improved computational efficiency over \flamingo, as it requires fewer participating decryptors—$k$ can be as low as 2—compared to a minimum of 64 in \flamingo~\cite{flamingo}. Moreover, \sys is more efficient than the maliciously-secure version of \micro~\cite{microfedml}, which involves a linear number of operations, including signature verification, key agreement, secret sharing, and symmetric encryption/decryption, leading to higher computational costs.
While \sys incurs additional overhead in precomputing signing parameters and masks compared to \eseafl, this cost is offset by the increased efficiency in the subsequent aggregation phase, with minimal impact on the offline setup phase. 
Additionally, in contrast to the protocol by Bell et al.~\cite{bell2020secure}, which requires the setup phase to be repeated in every aggregation round due to the inability to reuse key material, \sys avoids this repetitive overhead, further enhancing its overall efficiency.

\subsubsection{Aggregation Phase}
During the aggregation phase, \sys further reduces computational overhead. The user only needs to perform $k + 1$ summations and two optimized signature generation operations to achieve  security against malicious adversaries. 
Notably, $\PRF$ computations are not required during aggregation since the masks are precomputed in the Setup phase. Given that $k$ is much smaller than the number of participating users or decryptors, \sys offers significantly higher efficiency compared to alternative protocols.

\begin{table*}[!t]
	\centering
 
	 \caption{Analytical performance of the secure aggregation of \sys~and its counterparts}
	\label{tab:analytical}
 	 \resizebox{1\textwidth}{!}{
	\begin{threeparttable}
	\begin{tabular}{|c|c|c|c||c|c|c|}
			\hline
		\multirow{2}{*}{\textbf{Scheme}} & \multicolumn{3}{c||}{\textbf{Setup Phase}} & \multicolumn{3}{c|}{\textbf{Aggregate Phase} }	\\	 \cline{2-7}
		 & \textbf{User} &
		 \begin{tabular}{@{}c@{}c@{}}
		 \textbf{Assisting/Helper} \\ \textbf{Nodes} \end{tabular}& \textbf{Server} &
		 \textbf{User} &
		 \begin{tabular}{@{}c@{}c@{}}
		 \textbf{Assisting/Helper} \\ \textbf{Nodes} \end{tabular} & \textbf{Server}	\\	\hline

\micro \cite{microfedml} &
		\begin{tabular}{@{}c@{}c@{}} $1_\Kg + 1_\Sgn+|U'-1|(1_\Vfy+$
		\\
		$1_\Ka + 1_{\sshl} + 1_\AEnc+1_\ADec)$
		\end{tabular} & N/A
		& ${|U|_\Vfy }$
		&\begin{tabular}{@{}c@{}c@{}} $(|w|+1)_\Ex+1_\Sgn+$\\$|U-1|(1_\Su+1_\Vfy)+1_\Su$ \end{tabular}& N/A & \begin{tabular}{@{}c@{}c@{}} $1_{\Rsshe} +|w|_\Ex + $\\
		$1_\Mu$\end{tabular}	\\	\hline

	\flamingo	\cite{flamingo} &
		$|D'|_\Vfy$
		& \begin{tabular}{@{}c@{}c@{}}$2_{\PIL} + 3_\Sgn+2_{\sshl} +$	\\ $	|D|(3|D'|+1)_\Vfy + $\\
		$(|D'|\cdot|D|)_\Mu$\end{tabular}
		& $(|D'|\cdot|D|)_\Vfy$ & \begin{tabular}{@{}c@{}c@{}} $2_\PRG+|D'|(1_\Ex+1_\Hash+1_\PRF+$\\$1_\Su+1_\AEnc+1_\Sgn+1_\SEnc)+1_{\sshl}$ \end{tabular}&\begin{tabular}{@{}c@{}c@{}}$1_\Sgn+|U'|_\SDec +$ \\ $(|D|+|U'|)_\Vfy
		$\end{tabular}
		&
		\begin{tabular}{@{}c@{}c@{}} $|D|_{\PIL}+|U'|(1_\Su + $ \\ $1_\Mu +1_\Rssh)+2_\PRG+ $ \\ $ (|U'|+1)_\Su $ \end{tabular}
		\\	\hline

		 \eseafl \cite{behnia2023efficient}&
		 $2_\Kg + k_\Ka$
		 & $2_\Kg + k_\Ka$
		 & $1_\Kg$
		 & $k_\PRF+|k+1|_\Su+2_\Sgn$ &\begin{tabular}{@{}c@{}c@{}} $|U'| (1_\Vfy+1_\Su+ $\\
		 $1_\PRF)+ {1_\Sgn}$\end{tabular}
		 & \begin{tabular}{@{}c@{}c@{}}$(|U'|+|k|)(1_\Vfy + $ \\ $1_\Su)$
		 \end{tabular}\\
		 \hline

		 \sys &
		 $2_\Kg + k_\encap + N_\precomputesgn+(T\times k)_\PRF$
		 & $2_\Kg + |U|_\decap + (T\times|U|)_\PRF$ $+N_\precomputesgn$
		 & $1_\Kg$
		 & $|k+1|_\Su+2_\PSgn$ &\begin{tabular}{@{}c@{}c@{}} $|U'| (1_\Vfy+1_\Su) $\\
		 $+ {1_\PSgn}$\end{tabular}
		 & \begin{tabular}{@{}c@{}c@{}}$(|U'|+|k|)(1_\Vfy + $ \\ $1_\Su)$
		 \end{tabular}\\
		 \hline
		 		 
	\end{tabular}
\begin{tablenotes}[para, flushleft]
	 
	$\Kg$, $\Sgn$, $\Vfy$, and $\Hash$ denote key generation, signature generation, verification, and hash function, respectively.
	$\Ka$ denotes the shared key computation in the key agreement protocol.
	$|D|$ and $|D'|$ denote the number of all decryptors and the threshold of participating decryptors~\cite{flamingo}, respectively.
	$\PIL$ denotes polynomial interpolation of length $L$.
	$\sshl$, $\Rssh$, and $\Rsshe$ denote secret sharing operation to $l$ shares, share reconstruction, and share reconstruction in the exponent, respectively.
	$\Mu$, $\Su$ and $\Ex$ denote multiplication, summation and exponentiation operations, respectively.
	$\PRG$ and $\PRF$ denote pseudorandom generators and pseudorandom functions, respectively.
	$\AEnc$, $\ADec$, $\SEnc$, and $\SDec$ denote asymmetric encryption, decryption, symmetric encryption, and decryption, respectively.
	$|w|$ denotes the number of model parameters.
	$k$ is the set of assisting nodes in \sys.
	\end{tablenotes}
	\end{threeparttable}
		 }
   
\end{table*}

\subsection{Implementation}
We implement \sys with 2.5k lines of code, involving two components, namely the secure aggregation for gradient updates and the model training with multiple DP protections.
Our code is available at \url{https://github.com/kydahe/Beskar}.

\noindent
\textbf{Secure Aggregation.}
\sys builds on ABIDES~\cite{byrd2019abides}, a discrete event simulation framework commonly used in FL research~\cite{flamingo,microfedml}, enabling the simulation of multi-iterative aggregation protocols.
We incorporated two post-quantum algorithms, i.e., the Kyber key encapsulation algorithm for negotiating shared secrets between clients and assisting nodes, and Dilithium signing algorithm for generating and verifying digital signatures during aggregation phase.
These algorithms are implemented based on existing libraries~\cite{kyberpy, dilic}, with Dilithium modified to support signing with precomputation in C. 
For the Pseudo-Random Function (PRF) used in mask generation, we employed ASCON~\cite{dobraunig2021ascon}, known for its lightweight and efficient cryptography, making it suitable for resource-constrained client devices.


\noindent
\textbf{Model Training with Multiple DP Protections.}
We adopt the Flower FL framework~\cite{beutel2020flower}
and integrate our 
DP protections using PyTorch Opacus~\cite{opacus}. In particular, Opacus determines the required noise multiplier  through an iterative procedure that balances privacy and utility. It first converts the training configuration into a total number of steps (via the number of iterations and client fraction fits), then uses a DP accountant (e.g., Rényi differential privacy~\cite{mironov2017renyi}) to assess \(\epsilon\) for a given noise multiplier. The algorithm locates an appropriate noise multiplier by expanding an upper bound until \(\epsilon\) is satisfied, followed by a binary search for finer precision. We implement two different DP methods—tailored to distinct threat models—alongside a baseline FL model without DP protection. Each model is trained five times, and the highest accuracy is reported, accounting for random variations in initialization and privacy mechanisms.



\subsection{Experimental Environment}
The secure aggregation experiments were conducted on an x86\_64 Linux machine with AMD Ryzen Threadripper PRO 5965WX 24-Cores and 256 GB RAM. 
The models (with DP schemes) were trained using a NVIDIA RTX 6000 Ada Generation GPU on another x86\_64 Linux machine with Intel(R) Xeon(R) w7-2475X and 256 GB RAM.



\subsection{Experimental Setup and Evaluation Metrics}
We systematically evaluated \sys with respect to two key dimensions: efficiency  and performance.
Efficiency was measured by empirical time complexity during the setup and aggregation phases, along with corresponding bandwidth usage. 
Performance was assessed based on accuracy using 
five standard vision benchmarks that grow in difficulty and size: MNIST~\cite{xiao2017fashion}, EMNIST~\cite{cohen2017emnist}, CIFAR-10~\cite{krizhevsky2009learning}, CIFAR-100~\cite{krizhevsky2009learning}, and CHMNIST~\cite{kather2016multi}. We pair each dataset with a representative architecture of ascending capacity: a lightweight MLP for MNIST and EMNIST, ResNet-18 for CIFAR datasets, and a larger AlexNet-style network for CHMNIST.
In the following, we describe the three empirical metrics employed in our experiments.


\noindent
\textbf{Metrics for Efficiency:} We use two metrics, i.e., computation time and communication bandwidth, for efficiency measurement.
Comparative analysis was conducted using \flamingo\cite{flamingo}, \pqsa\cite{yang2022post}, \eseafl\cite{behnia2023efficient}, and \micro\cite{microfedml}, where \micro includes two variants, \microfirst and \microsecond.
\begin{itemize}
\item \textit{Computation Time.} To assess the efficiency of our secure aggregation method with post-quantum protection, we measured the time required for computational operations such as key encapsulation, mask generation, signing, and verification. The number of clients ($n$) was varied from 200 to 1000, with three assisting nodes ($k=3$) and one aggregation server. 
We also tested performance with a gradient vector size of 16,000 to simulate high-dimensional data. 

\item \textit{Communication Bandwidth.} We recorded the bandwidth, i.e., total message size exchanged between clients, assisting nodes, and the server, to evaluate communication costs.
The reason why we use message size instead of transmission time is because message transmission time highly depends on current network conditions. Variations in hardware and network quality can greatly affect latency.
\end{itemize}



\noindent
\textbf{Metric for Performance:}
We assessed the performance of our DP methods by evaluating the \textit{noise multiplier} and the \textit{accuracy} of the final models. Client data allocation adheres to standard protocols~\cite{yang2023privatefl,flamingo}. 
Unless specified otherwise, training proceeds with 5 clients, 50 communication rounds, and a client-sampling fraction of 1.0—in other words, each client contributes to every round. Pilot runs showed that fifty rounds are already enough for the models to converge under DP noise; extending training beyond that point improves accuracy only marginally while increasing computation. 
Using a sampling fraction of one eliminates variability due to partial participation, allowing us to isolate the sole effect of injected noise.
Starting from this baseline, we varied the privacy budget $\epsilon$ values across $\{5, 10, 15, 20\}$ to evaluate the privacy–utility trade-off. We then fixed $\epsilon$ at 10 and varied the two factors independently. 
First, we shortened training to as few as a single iteration and gradually extended it through $\{1, 5, 10, 15, 20\}$ to reveal how DP noise hinders convergence when updates are scarce. Second, with 10 clients in total, we adjusted sampling fraction over $\{0.3, 0.5, 0.7, 0.9\}$ so that different number of clients participate in each iteration, to evaluate the impact of the size of the actively participating subset on both noise levels and final accuracy. 



\section{Evaluation Results}

In this section, we present the experimental results evaluating the efficiency and performance of \sys. Our findings address the following key research questions:





\begin{itemize}

\item \textbf{Section~\ref{sec:eval:eff} (Efficiency Measurement)}: How efficient is post-quantum enhanced secure aggregation in FL? What gains in efficiency does \sys achieve through the use of pre-computation?
\item \textbf{Section~\ref{sec:eval:perf} (Model Performance Measurement)}: How does \sys’s differential privacy module perform under different threat models? What is the impact of DP-enhanced methods on FL performance across different training datasets?
\end{itemize}

\subsection{Efficiency Measurement}~\label{sec:eval:eff}


\subsubsection{Computational Cost}
We systematically compare the computational cost of \sys in the Setup phase and Aggregation phase with four state-of-the-art approaches, i.e., \flamingo, \pqsa, \eseafl, and \micro.


\begin{figure*}[htbp]
    \centering
    \begin{minipage}{0.32\textwidth}
        \centering
        \includegraphics[width=\textwidth]{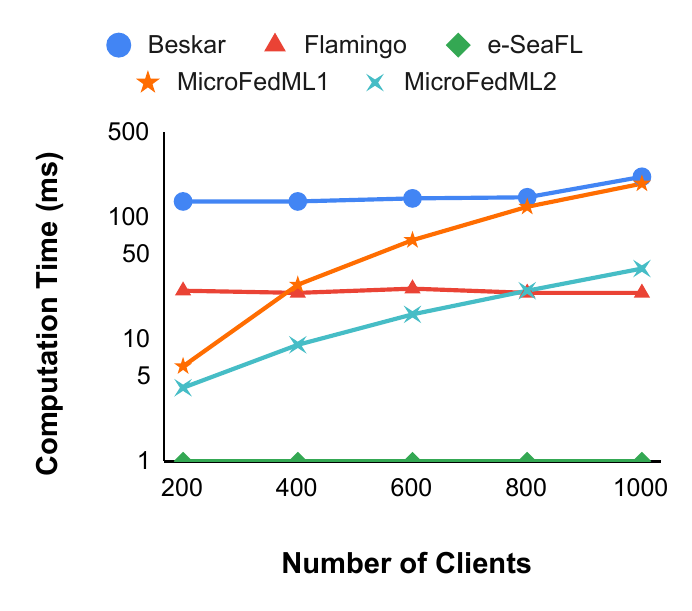}
        \caption*{(a) Server}
    \end{minipage}
    \hfill
    \begin{minipage}{0.32\textwidth}
        \centering
        \includegraphics[width=\textwidth]{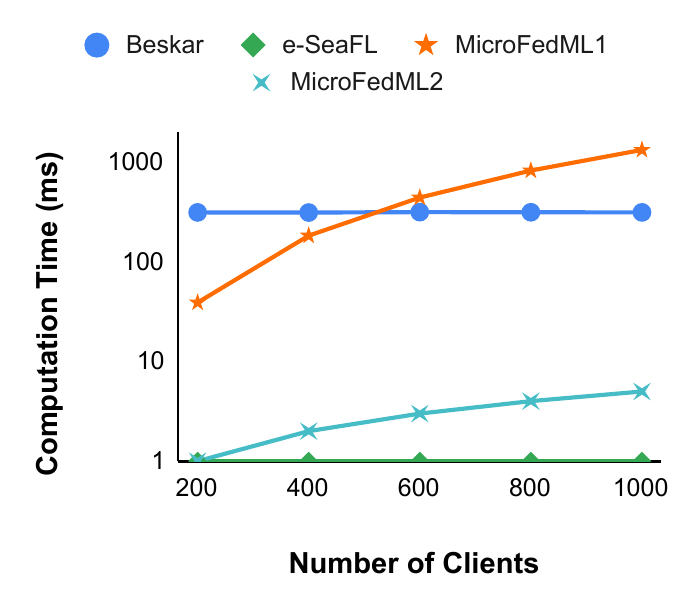}
        \caption*{(b) Client}
    \end{minipage}
    \hfill
    \begin{minipage}{0.32\textwidth}
        \centering
        \includegraphics[width=\textwidth]{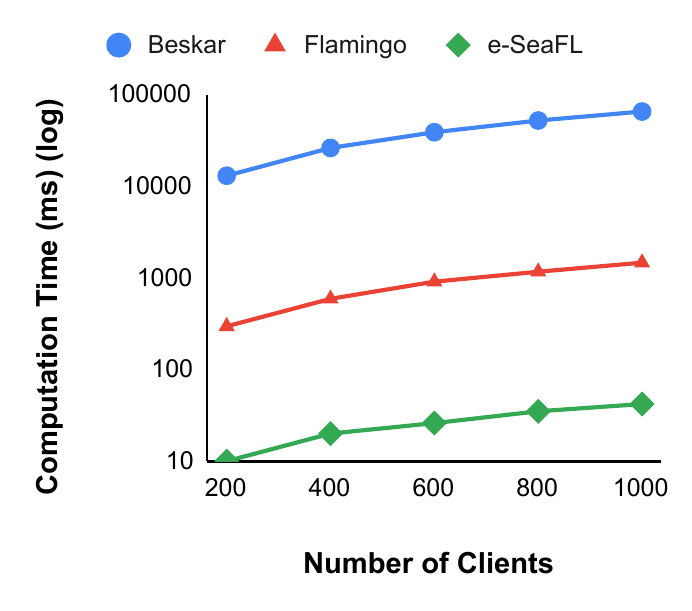}
        \caption*{(c) Assisting node}
    \end{minipage}
    \caption{Computation Time of the Setup phase (The dimension of the weight list is set to 16K.)}
    \label{fig:sa_setup_time}
\end{figure*}

\vspace{5pt}
\noindent
\textbf{Setup Phase.}
Fig.~\ref{fig:sa_setup_time} shows the computation costs of each method during the Setup phase.
\sys stands out for its nearly constant computation time on 
the server (Fig.~\ref{fig:sa_setup_time} (a)) and client (Fig~\ref{fig:sa_setup_time} (b)) side,
even as the number of clients increases.
This is because the operations \sys performs on the client side are primarily dependent on the number of assisting nodes, which typically remains constant in general FL settings, and the fact that server-side operations have a constant computational cost.
In contrast, the computation time for assisting nodes increases as the number of clients grows (Fig.~\ref{fig:sa_setup_time} (c)).
This increase results from tasks performed by assisting nodes, such as shared secret generation, which become more intensive as  the number of clients rises.

Our results also show that, in the setup phase, \sys incurs higher computation times compared to \flamingo, \eseafl and \micro across all entities.
The reason is that \sys incorporates an additional precomputation step in the setup phase, where several digital signature parameters are precomputed for subsequent runtime signing (see Section~\ref{sec:pqfl:sign}).
\pqsa is excluded from this comparison as it only distributes several public parameters without performing significant computations. It is also worth noting that \flamingo does not involve any client-side operations, as it relies on a trusted third party to generate and distribute signing keys for each entity.

\begin{figure*}[htbp]
    \centering
    \begin{minipage}{0.32\textwidth}
        \centering
        \includegraphics[width=\textwidth]{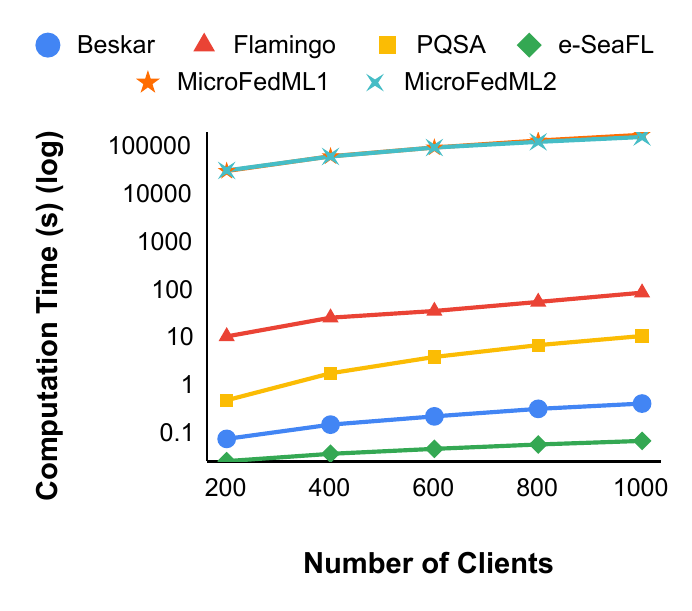}
        \caption*{(a) Server}
    \end{minipage}
    \hfill
    \begin{minipage}{0.32\textwidth}
        \centering
        \includegraphics[width=\textwidth]{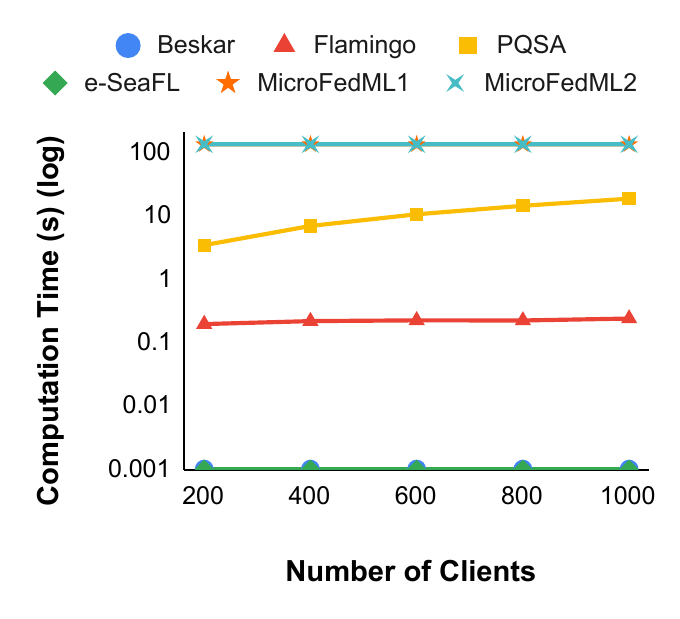}
        \caption*{(b) Client}
    \end{minipage}
    \hfill
    \begin{minipage}{0.32\textwidth}
        \centering
        \includegraphics[width=\textwidth]{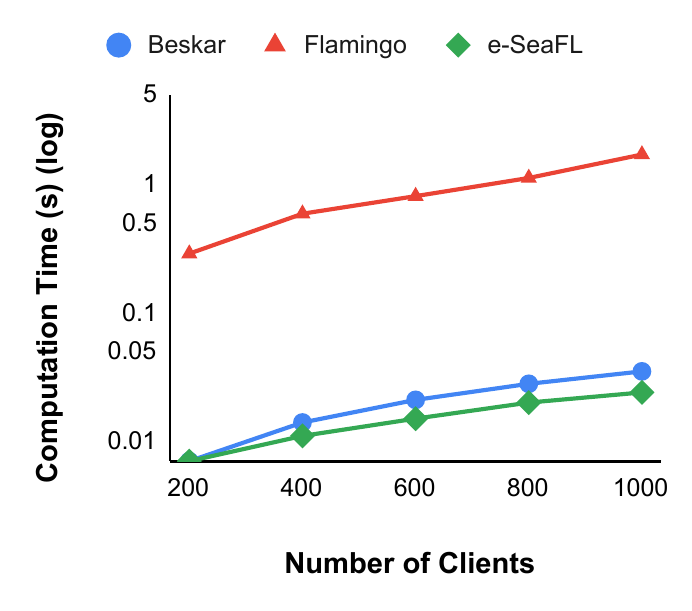}
        \caption*{(c) Assisting node}
    \end{minipage}
    \caption{Computation Time of the Aggregation phase (The dimension of the weight list is set to 16K.)}
    \label{fig:sa_agg_time}
\end{figure*}

\vspace{5pt}
\noindent
\textbf{Aggregation Phase.}
Fig.~\ref{fig:sa_agg_time} shows the computation time of \sys in the aggregation phase compared with that of \flamingo, \pqsa, \eseafl, and \micro.
A key advantage of \sys is that client-side computation time remains constant
(Fig.~\ref{fig:sa_agg_time} (b)), 
, even as the number of clients grows.
This stability minimizes client-side resource demands, making \sys particularly suitable for settings with resource-constrained FL clients.
Moreover, the computation time increases for the server (Fig.~\ref{fig:sa_agg_time} (a)) and assisting nodes (Fig.~\ref{fig:sa_agg_time} (c)) as the number of clients grows.
This increase is primarily due to the need for assisting nodes and the server to verify the signatures of messages sent by clients, making computation time proportional to the number of clients.

The results demonstrate that all entities in \sys incur minimal computational overhead, with computation times significantly lower than those of \flamingo, \pqsa, and \micro, with the exception of \eseafl.
Specifically, \sys has a similar computation time to \eseafl on the client side, but a higher computation time on the assisting nodes and server.
This difference arises because both \sys and \eseafl clients perform similar operations, including gradient masking and a few signing operations. However, the assisting nodes and server in \sys must conduct a large number of signature verifications, which scale with the number of clients. The post-quantum signing algorithm (i.e., Dilithium) used by \sys is originally slower than the traditional algorithm (i.e., ECDSA) used in \eseafl, leading to higher computation times as the number of clients increases.

\sys outperforms \flamingo, \pqsa, and \micro in terms of computation time across clients, assisting nodes, and the server. 
\flamingo requires additional steps to remove masks for dropout clients, necessitating extra information (e.g., shares of secrets used to generate the masks) from clients and decryptors to allow the server to recover the gradient vectors of those offline clients.
\pqsa uses SHPRG-based encryption for gradient masking, which involves an extra step between clients and the server to decrypt and recover the masks and calculate the aggregated result.
In \micro, each client must additionally compute $H(k)^{X_i}$ and $H(k)^{R_i}$, and the server has to recover the gradient vector by calculating the discrete logarithm of $H(k)^{X_i} - H(k)^{R_i}$. Such discrete logarithm calculation is computationally very intensive, especially as the vector dimensions increase.

Overall, compared with the existing approaches, \sys employs a straightforward strategy for dropout clients by ignoring the masked vectors of dropped clients and utilizes a post-quantum algorithm with precomputation for optimization. 
This approach reduces the number of messages, eliminates unnecessary calculations, and significantly reduces time overhead while maintaining quantum security.

\subsubsection{Communication Cost}
We analyze the communication cost, specifically the outbound bandwidth, of \sys during the setup and aggregation phases in comparison to existing approaches.

\begin{figure*}[htbp]
    \centering
    \begin{minipage}{0.32\textwidth}
        \centering
        \includegraphics[width=\textwidth]{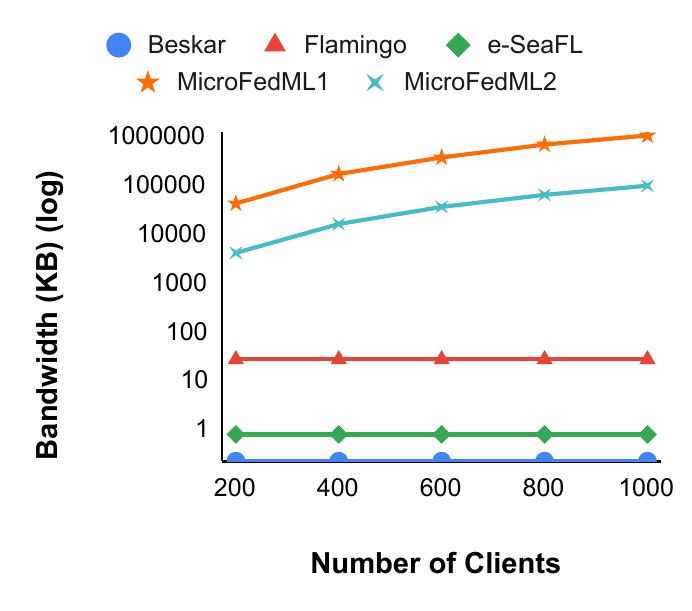}
        \caption*{(a) Server}
    \end{minipage}
    \hfill
    \begin{minipage}{0.32\textwidth}
        \centering
        \includegraphics[width=\textwidth]{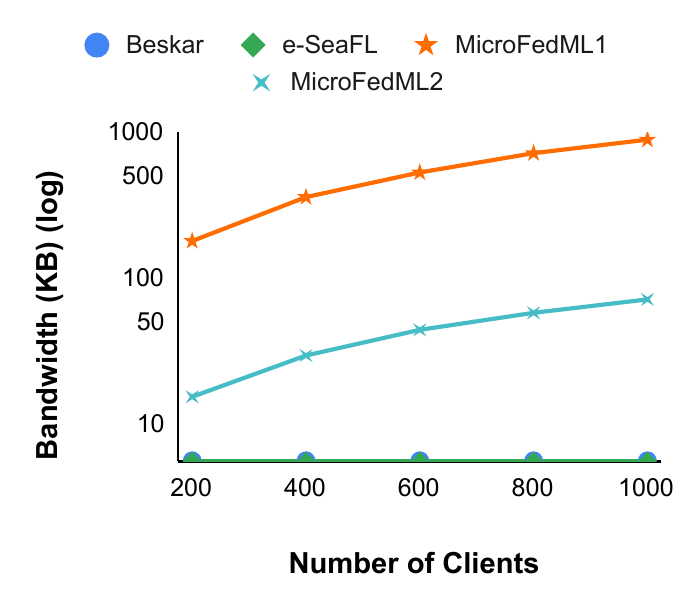}
        \caption*{(b) Client}
    \end{minipage}
    \hfill
    \begin{minipage}{0.32\textwidth}
        \centering
        \includegraphics[width=\textwidth]{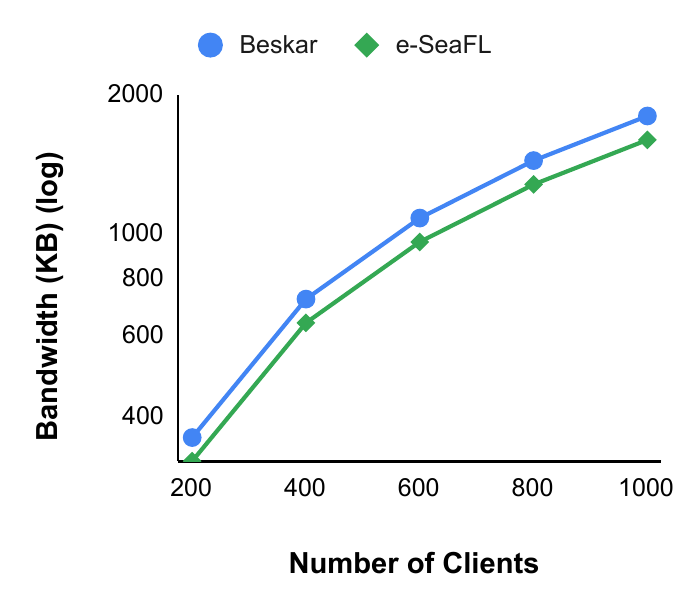}
        \caption*{(c) Assisting node}
    \end{minipage}
    \caption{Bandwidth in the Setup Phase (The dimension of the weight list is set to 16K.)}
    \label{fig:sa_setup_bw}
\end{figure*}

\vspace{5pt}
\noindent
\textbf{Setup Phase.}
Fig.~\ref{fig:sa_setup_bw} presents the outbound bandwidth costs of all entities in \sys during the setup phase. 
In general, the outbound bandwidth of \sys remains constant on the server (Fig.~\ref{fig:sa_setup_bw} (a)) and client (Fig.~\ref{fig:sa_setup_bw} (b)) sides while increases on the assisting nodes (Fig.~\ref{fig:sa_setup_bw} (c)) as the number of clients grows.
This is due to the fact that during the setup phase, the server only needs to broadcast its public key for signing, and each client sends the public keys for key exchange and signing to the assisting nodes, whose number keeps constant in our experimental setting.
However, each assisting node must send these public keys to all clients, making its outbound bandwidth dependent on the number of clients.

\sys incurs less outbound bandwidth than \flamingo and \micro but it is generally comparable to \eseafl across all three entities.
Specifically, during the setup phase, \sys and \eseafl perform the same public key exchange operations, resulting in similar outbound bandwidths, both of which depend on the size of the public keys.
In contrast, in \flamingo, the server and decryptors participate in DKG protocols, where their outbound bandwidth depends on the number and size of the messages in DKG protocols and is proportional to the number of decryptors, but constant relative to the number of clients. This results in greater outbound bandwidth for the server and decryptors in \flamingo compared to \sys, which only involves public key exchanges.
As for \micro, the server has to forward all the communications, i.e., the public keys and secret shares generated by Shamir’s secret sharing scheme, from one client to each of the others, making its bandwidth proportional to the number of clients and higher than that of \sys. Each client in \micro should send its public key and $n-1$ mask shares to all other clients, making its bandwidth increase proportional to the number of clients, which is significantly higher than that of \sys.

Additionally, \pqsa and the clients of \flamingo are excluded from this comparison for the same reasons previously mentioned.

\begin{figure*}[htbp]
    \centering
    \begin{minipage}{0.32\textwidth}
        \centering
        \includegraphics[width=\textwidth]{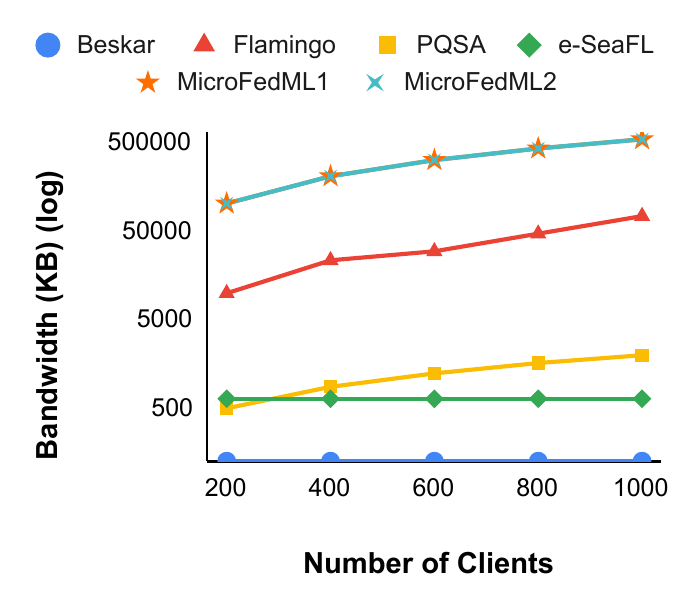}
        \caption*{(a) Server}
    \end{minipage}
    \hfill
    \begin{minipage}{0.32\textwidth}
        \centering
        \includegraphics[width=\textwidth]{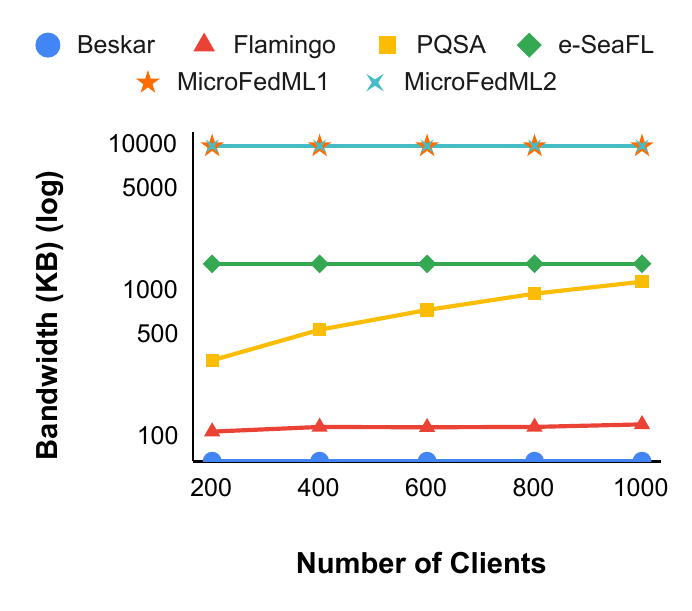}
        \caption*{(b) Client}
    \end{minipage}
    \hfill
    \begin{minipage}{0.32\textwidth}
        \centering
        \includegraphics[width=\textwidth]{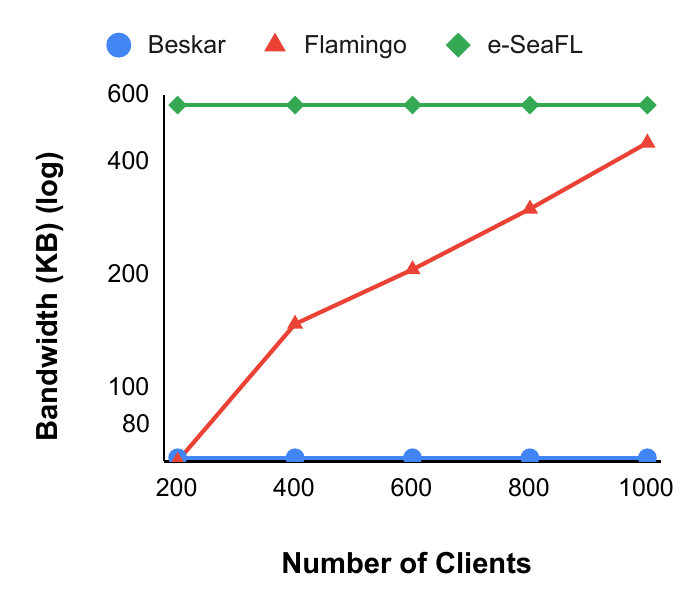}
        \caption*{(c) Assisting node}
    \end{minipage}
    \caption{Bandwidth in the Aggregation phase (The dimension of the weight list is set to 16K.)}
    \label{fig:sa_agg_bw}
\end{figure*}

\noindent
\textbf{Aggregation Phase.}
We present the outbound bandwidth cost of all entities in \sys during the runtime aggregation phase in Fig.~\ref{fig:sa_agg_bw}. 
Generally, 
the outbound bandwidth of \sys remains constant across all three entities, 
i.e., the server (Fig.~\ref{fig:sa_agg_bw} (a)), the clients (Fig.~\ref{fig:sa_agg_bw} (b)), and the assisting nodes (Fig.~\ref{fig:sa_agg_bw} (c)) as the number of clients increases.
This stability occurs because the outbound messages for each entity do not depend on the number of clients. Specifically, each client's outbound messages depend only on the number of assisting nodes and the server, which remain constant in our experimental setup. Assisting nodes send a single message (containing the sum of partial masks) to the server, and the server broadcasts the aggregated results, keeping their outbound bandwidth unchanged.

When compared with existing approaches, \sys demonstrates lower outbound bandwidth costs across all entities compared to \flamingo, \pqsa, \eseafl, and \micro.
Specifically, in \flamingo, the server is involved in forwarding the messages between the clients and decryptors, leading to a higher outbound bandwidth that is roughly proportional to the number of clients.
Clients have to send secret shares to decryptors and neighbors (i.e., clients in the same group), in addition to masked updates to the server, which increases their outbound bandwidth proportional to the number of decryptors and clients (i.e., neighbors), which is significantly higher than \sys.
Decryptors also transmit decrypted secret shares of each client to the server, resulting in higher bandwidth proportional to the number of clients, compared to \sys.
For \pqsa, the server performs an additional communication step with clients to decrypt and recover the masks, resulting in a higher bandwidth that is proportional to the number of clients.
Similarly, clients have to additionally send the decrypted secrets, except for the masked updates, leading to a higher bandwidth, which is proportional to the number of clients.
While \eseafl performs the similar operations to \sys across all entities, the difference in outbound bandwidth is primarily due to the size of each message, especially the size of signatures.
As for \micro, similar to \flamingo, the server has to forward the signature of each client to other clients, incurring a higher outbound bandwidth than \sys, which is proportional to the number of clients. 
Moreover, clients in \micro send masks of online clients along with the masked updates to the server, resulting in bandwidth higher than in \sys.

\subsubsection{Precomputation Improvement}

\renewcommand{\arraystretch}{1.2}
\begin{table}[]
\centering
\caption{Results of Precomputation Improvement (Computation Time in the Aggregation Phase) (ms): P - Precomputation}
\label{table:precomputed}
\begin{tabular}{|c|c|c||c|c||c|c|}
\hline
\multirow{2}{*}{\textbf{N}} & \multicolumn{2}{c||}{\textbf{Client}} & \multicolumn{2}{c||}{\textbf{Server}} & \multicolumn{2}{c|}{\textbf{Assisting Node}} \\ \cline{2-7}
& w/o P & w/ P & w/o P & w/ P & w/o P & w/ P \\
\hline

200  & 135 & 1 & 77  & 76  & 8767 & 7  \\ \hline
400  & 134 & 1 & 163 & 151 & 17703 & 14  \\ \hline
600  & 135 & 1 & 227 & 225 & 27008 & 21  \\ \hline
800  & 134 & 1 & 323 & 302 & 35281 & 28  \\ \hline
1000 & 134 & 2 & 417 & 377 & 43144 & 35 \\ \hline

\end{tabular}
\end{table}

To assess the impact of precomputation strategies on \sys, we conducted a comparative experiment implementing \sys without precomputation in the Dilithium signing algorithms and mask generation. The results, shown in Table~\ref{table:precomputed}, indicate that precomputation significantly enhances efficiency for both assisting nodes and clients as the number of clients increases.

Without precomputation, an assisting node must perform additional mask generation operations for $n$ clients and one signing operation, while a client must perform additional mask generation for $k$ assisting nodes and $k+1$ signing operations. Consequently, precomputation notably reduces the computational burden on these entities.

On the server side, the benefits of precomputation are less pronounced. The reason is that the server performs relatively fewer computational operations, limited to only one signing operation, compared to the assisting nodes and clients.

\subsection{Feasibility of Employing Assisting Nodes}
Following our computational and communication evaluation of \sys~and its counterpart, we emphasize that a assisting node incurs only minimal additional overhead—specifically, about \uline{6 ms more computation time} than a regular client (with a total of 200 clinets), and \uline{approximately 2 KB less in communication}. Given this lightweight cost, a simple distributed ledger-based method can be used to select a set of clients in each iteration to serve as assisting nodes.
This is not a strong assumption since sharing metadata about the model and training parameters is common in FL. Deploying assisting nodes significantly reduces the trust assumption because the privacy of the protocol holds as long as at least one assisting node is honest. At the same time, it minimizes the overhead on participating users and encourages broader participation in the learning process.

\subsection{Model Performance}~\label{sec:eval:perf}

\renewcommand{\arraystretch}{1.2}
\begin{table*}[t]
\centering
\caption{Results of Model Performance under Different Differential Privacy Parameters.}
\label{tab:dp_acc}
\begin{tabular}{|c|c|c||c|c|c||c|c|c||c|c|c||c|c|c||c|c|c|}
\hline
\multirow{2}{*}{\textbf{$\epsilon$}} & 
\multirow{2}{*}{\textbf{\#Iters}} &
\multirow{2}{*}{\textbf{\#Cli}} &
\multicolumn{3}{c||}{\textbf{MNIST}} & 
\multicolumn{3}{c||}{\textbf{EMNIST}} & 
\multicolumn{3}{c||}{\textbf{CIFAR-10}} & 
\multicolumn{3}{c||}{\textbf{CIFAR-100}} &
\multicolumn{3}{c|}{\textbf{CHMNIST}} \\ \cline{4-18}
& & 
& \textbf{NDP} & \textbf{CDP} & \textbf{LDP} 
& \textbf{NDP} & \textbf{CDP} & \textbf{LDP}
& \textbf{NDP} & \textbf{CDP} & \textbf{LDP}
& \textbf{NDP} & \textbf{CDP} & \textbf{LDP}
& \textbf{NDP} & \textbf{CDP} & \textbf{LDP} \\ \hline

20 & 50 & 5/5 & 0.99 & 0.98 & 0.86 & 0.99 & 0.98 & 0.87 & 0.83 & 0.81 & 0.45 & 0.76 & 0.63 & 0.02 & 0.84 & 0.83 & 0.27 \\ \hline
15 & 50 & 5/5 & 0.99 & 0.97 & 0.84 & 0.99 & 0.98 & 0.85 & 0.83 & 0.80 & 0.23 & 0.76 & 0.56 & 0.02 & 0.84 & 0.81 & 0.26 \\ \hline
10 & 50 & 5/5 & 0.99 & 0.87 & 0.78 & 0.99 & 0.90 & 0.82 & 0.83 & 0.77 & 0.20 & 0.76 & 0.23 & 0.02 & 0.84 & 0.76 & 0.23 \\ \hline
5  & 50 & 5/5 & 0.99 & 0.77 & 0.71 & 0.99 & 0.80 & 0.77 & 0.83 & 0.57 & 0.17 & 0.76 & 0.02 & 0.01 & 0.84 & 0.52 & 0.22 \\ \hline \hline

10 & 1  & 5/5 & 0.95 & 0.09 & 0.09 & 0.94 & 0.10 & 0.09 & 0.38 & 0.28 & 0.10 & 0.02 & 0.02 & 0.01 & 0.40 & 0.42 & 0.11 \\ \hline
10 & 5  & 5/5 & 0.96 & 0.54 & 0.53 & 0.94 & 0.41 & 0.71 & 0.68 & 0.60 & 0.20 & 0.64 & 0.18 & 0.01 & 0.64 & 0.65 & 0.13 \\ \hline
10 & 10 & 5/5 & 0.98 & 0.91 & 0.72 & 0.95 & 0.91 & 0.77 & 0.78 & 0.68 & 0.20 & 0.73 & 0.25 & 0.02 & 0.73 & 0.68 & 0.23 \\ \hline
10 & 15 & 5/5 & 0.98 & 0.92 & 0.77 & 0.97 & 0.93 & 0.80 & 0.78 & 0.72 & 0.10 & 0.75 & 0.30 & 0.02 & 0.79 & 0.71 & 0.26 \\ \hline
10 & 20 & 5/5 & 0.98 & 0.93 & 0.79 & 0.98 & 0.94 & 0.83 & 0.79 & 0.73 & 0.24 & 0.76 & 0.33 & 0.02 & 0.81 & 0.72 & 0.27 \\ \hline \hline

10 & 50 & 3/10 & 0.99 & 0.48 & 0.78 & 0.99 & 0.56 & 0.82 & 0.83 & 0.73 & 0.23 & 0.72 & 0.43 & 0.02 & 0.84 & 0.72 & 0.22 \\ \hline
10 & 50 & 5/10 & 0.99 & 0.87 & 0.78 & 0.99 & 0.90 & 0.82 & 0.83 & 0.77 & 0.20 & 0.74 & 0.44 & 0.02 & 0.84 & 0.76 & 0.23 \\ \hline
10 & 50 & 7/10 & 0.99 & 0.95 & 0.77 & 0.99 & 0.96 & 0.81 & 0.81 & 0.79 & 0.15 & 0.75 & 0.45 & 0.02 & 0.84 & 0.81 & 0.27 \\ \hline
10 & 50 & 9/10 & 0.99 & 0.97 & 0.75 & 0.99 & 0.97 & 0.80 & 0.80 & 0.79 & 0.13 & 0.75 & 0.45 & 0.01 & 0.84 & 0.81 & 0.27 \\ \hline 

\end{tabular}
\end{table*}

\begin{figure}[t]
    \centering
    \includegraphics[width=0.5\textwidth]{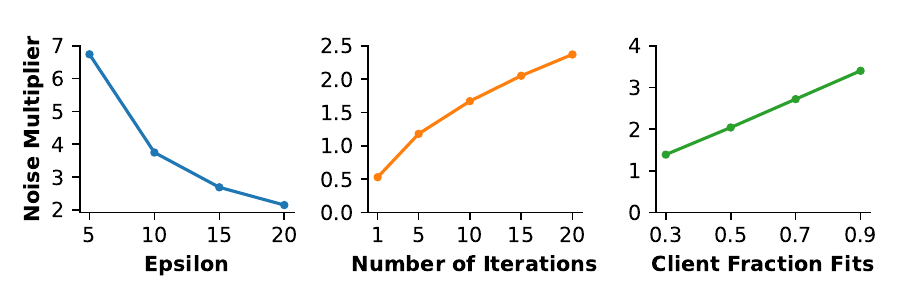}
    \caption{Effect of Epsilon, Iterations, and Clients on Noise.}
    \label{fig:noise}
\end{figure}


Table~\ref{tab:dp_acc} presents accuracy results on MNIST, EMNIST, CIFAR-10, CIFAR-100, and CHMNIST under Non-Differential Privacy (NDP), Central Differential Privacy (CDP), and Local Differential Privacy (LDP) across varying privacy budgets $\epsilon$, training iterations, and the fraction of participating clients (fraction fit). Fig.~\ref{fig:noise} shows the corresponding noise multipliers under these configurations. 
Below, we discuss how $\epsilon$, the number of training iterations, and the fraction fit affect both model accuracy and noise. 

\subsubsection{Effect of Privacy Budget ($\epsilon$)}

We first examine how varying privacy budget $\epsilon$ impacts both the noise multiplier and the model accuracy. As expected, NDP, which injects no noise, yields the highest accuracy across all datasets. Once privacy is enforced, both CDP and LDP see reduced accuracy relative to NDP, with CDP generally outperforming LDP.

\noindent
\textbf{Noise Multiplier vs. $\epsilon$.}
Fig.~\ref{fig:noise} underscores the inverse relationship between $\epsilon$ and the noise multiplier. Smaller $\epsilon$ values offer stronger privacy guarantees but require to increase the noise multiplier, which injects more noise at each training iteration. This additional noise hampers the model’s ability to learn, explaining the sharper accuracy decline observed for lower $\epsilon$.

\noindent
\textbf{Model Accuracy vs. $\epsilon$.} 
Table~\ref{tab:dp_acc} shows that accuracy under both CDP and LDP improves as $\epsilon$ increases. At a relatively high privacy budget (e.g., $\epsilon$), the noise multiplier is moderate, and the MNIST model under CDP achieves an accuracy of 0.98—close to the NDP baseline of 0.99—while LDP reaches 0.86. However, tightening privacy (e.g., $\epsilon=5$) forces a much larger noise multiplier, causing significant accuracy drops and widening the gap between the DP methods and NDP.

\subsubsection{Sensitivity Analysis}

We next explore how other key hyperparameters—the number of training iterations and the fraction fit of clients—further influence the trade-off between model accuracy and noise levels. For these experiments, we fix $\epsilon$ while varying one parameter at a time.

\noindent
\textbf{Number of Training Iterations.}
Increasing the number of training iterations substantially improves model convergence, even under noisy conditions. When training for only one iteration under LDP on MNIST with $\epsilon=10$, accuracy can be just under $10\%$. In contrast, extending the training to 50 iterations under the same $\epsilon$ raises accuracy to above $70\%$. This indicates that additional iterations provide the model more opportunity to learn from the data, thereby offsetting much of the performance degradation caused by noise. 
Although longer training may require additional noise injections (i.e., larger noise multiplier),
the results show that the convergence gains generally outweigh the potential downside of increased noise, particularly when training has not yet converged.

\noindent
\textbf{Fraction fit of Clients.}
The fraction of clients selected to participate in each training iteration—rather than merely the total pool of clients—strongly influences both the noise multiplier and model accuracy. Under CDP, noise is injected centrally after aggregating client updates. When a larger fraction of the total clients participate, the training process benefits from more data per iteration, typically boosting accuracy. However, this increase in participating clients can also raise the noise multiplier, since a larger collective update requires proportionally more noise to preserve the privacy budget.  
Table~\ref{tab:dp_acc} shows that as this fraction increases, more data is included in each iteration, which improves convergence and often offsets any added noise, ultimately leading to higher accuracy for CDP.
By contrast, in LDP, each client adds noise directly to its own data or gradients. Hence, a larger fraction of participating clients produces a greater cumulative amount of noise in the aggregated update, which can reduce accuracy gains from using more data. 
As a result, we observed in Table~\ref{tab:dp_acc}, for LDP, larger client fractions can lead to lower accuracy.


\vspace{5pt}
In summary, both central differential privacy (CDP) and local differential privacy (LDP) incur some loss of accuracy relative to the non-private baseline (NDP); however, for tasks of moderate complexity this reduction remains modest, allowing both schemes to provide competitive performance while markedly strengthening data protection. Therefore, one can steer the privacy–utility trade-off by adjusting the privacy budget $\epsilon$, the number of training rounds, and the fraction of participating clients, thereby meeting application-specific privacy requirements without incurring unnecessary accuracy loss.

On more demanding benchmarks such as CIFAR-100, the strong noise injected under LDP hampers fine-grained feature learning and produces a noticeably sharper accuracy decline. 
Although recent study~\cite{yang2023privatefl} reports substantially higher CIFAR-100 accuracy with LDP, we took a closer examination which reveals that the reported gains hinge on non-standard model architectures, and the publicly released code, when adapted to our standardized experimental setup, fails to reproduce the claimed performance.

\section{{Related Work}}

Research on FL's security and privacy focuses on two main streams: addressing attacks targeting the privacy of the training data (\cite{zheng2022aggregation,bonawitz2017practical,bell2020secure}) and preventing client-side attacks undermining the model's reliability (\cite{bell2023acorn,roy2022eiffel}). 
Privacy-preserving methods can be categorized further into those that protect privacy during training and those that protect privacy post-training. 
Techniques safeguarding privacy during training focus on protecting user updates during the training phase~\cite{wang2024model}. They often leverage multi-party computation (MPC) or homomorphic encryption to allow the central server to compute the model in each iteration without accessing individual user updates. 
However, the direct application of these methods often fails to efficiently account for user dropouts, an inherent challenge in FL, where mobile participants unexpectedly exit the training process due to factors such as low battery or unstable network connections. 
A series of follow-up works have focused on designing more resilient, secure aggregation methods capable of handling user dropouts~\cite{zheng2022aggregation,eseafl,flamingo}.

Post-training privacy mechanisms rely on DP to address the deployed model's privacy leakage. These methods often focus on computing a carefully calibrated noise to be injected either at the user updates or the intermediate model to prevent the leakage of sensitive information from the deployed model. 

The other stream of research focuses on ensuring the reliability of the final model. Since the global model is derived from user updates, an adversary can compromise its reliability by injecting malicious updates. To mitigate these attacks, various input validation mechanisms have been proposed with the aim of ensuring that only legitimate updates are used to compute the final model. This can be done by leveraging various methods such as enforcing norm bounds~\cite{lycklama2023rofl}, the use of trusted hardware~\cite{zhao2021sear, guo2020v}, etc.

While existing approaches typically treat secure aggregation and DP as separate solutions, Beskar is the first to integrate both within a unified framework guided by a comprehensive threat model for FL. To the best of our knowledge, it is also the first protocol to address the emerging challenge of post-quantum security while preserving practical efficiency and model performance, particularly in resource-constrained mobile environments.

\section{Conclusion}
In this paper, we propose \sys, a secure framework to enhance privacy protection in FL by integrating post-quantum secure aggregation with differential privacy. We introduce novel precomputation techniques to optimize the efficiency of post-quantum signing algorithms and mask calculation for secure aggregation. Additionally, we present a comprehensive threat model for FL, addressing various adversarial scenarios and systematically applying differential privacy strategies to offer tailored defenses against diverse privacy threats. Through detailed efficiency measurements and performance analysis, we demonstrate that \sys achieves a balanced trade-off between security, computational efficiency, and model accuracy. Our contributions represent a significant advancement in making FL both secure and practical in a post-quantum world, addressing contemporary privacy challenges while adhering to regulatory requirements.

\section*{Acknowledgment}
The authors would like to express their sincere gratitude to the editor and reviewers for their guidance, insightful comments, and constructive feedback, which have significantly improved the quality of this paper.  s
This work reported in this paper has been supported by the National Science Foundation (NSF) under Grant No. 2112471, and by the NSF–SNSF joint program under Grant No. ECCS 2444615.


\ifCLASSOPTIONcaptionsoff
  \newpage
\fi



%


\bibliographystyle{IEEEtran}
\bibliography{bibtext,PPML}

\begin{thebibliography}{10}
\providecommand{\url}[1]{#1}
\csname url@samestyle\endcsname
\providecommand{\newblock}{\relax}
\providecommand{\bibinfo}[2]{#2}
\providecommand{\BIBentrySTDinterwordspacing}{\spaceskip=0pt\relax}
\providecommand{\BIBentryALTinterwordstretchfactor}{4}
\providecommand{\BIBentryALTinterwordspacing}{\spaceskip=\fontdimen2\font plus
\BIBentryALTinterwordstretchfactor\fontdimen3\font minus \fontdimen4\font\relax}
\providecommand{\BIBforeignlanguage}[2]{{%
\expandafter\ifx\csname l@#1\endcsname\relax
\typeout{** WARNING: IEEEtran.bst: No hyphenation pattern has been}%
\typeout{** loaded for the language `#1'. Using the pattern for}%
\typeout{** the default language instead.}%
\else
\language=\csname l@#1\endcsname
\fi
#2}}
\providecommand{\BIBdecl}{\relax}
\BIBdecl

\bibitem{carlini2021extracting}
N.~Carlini, F.~Tramer, E.~Wallace, M.~Jagielski, A.~Herbert-Voss, K.~Lee, A.~Roberts, T.~Brown, D.~Song, U.~Erlingsson \emph{et~al.}, ``Extracting training data from large language models,'' in \emph{30th USENIX Security Symposium (USENIX Security 21)}, 2021, pp. 2633--2650.

\bibitem{shokri2017membership}
R.~Shokri, M.~Stronati, C.~Song, and V.~Shmatikov, ``Membership inference attacks against machine learning models,'' in \emph{2017 IEEE symposium on security and privacy (SP)}.\hskip 1em plus 0.5em minus 0.4em\relax IEEE, 2017, pp. 3--18.

\bibitem{carlini2019secret}
N.~Carlini, C.~Liu, {\'U}.~Erlingsson, J.~Kos, and D.~Song, ``The secret sharer: Evaluating and testing unintended memorization in neural networks,'' in \emph{28th USENIX security symposium (USENIX security 19)}, 2019, pp. 267--284.

\bibitem{bonawitz2017practical}
K.~Bonawitz, V.~Ivanov, B.~Kreuter, A.~Marcedone, H.~B. McMahan, S.~Patel, D.~Ramage, A.~Segal, and K.~Seth, ``Practical secure aggregation for privacy-preserving machine learning,'' in \emph{Computer and Communications Security}, 2017, pp. 1175--1191.

\bibitem{rathee2022elsa}
M.~Rathee, C.~Shen, S.~Wagh, and R.~A. Popa, ``{ELSA}: Secure aggregation for federated learning with malicious actors,'' in \emph{IEEE symposium on security and privacy (SP)}, 2023, pp. 1961--1979.

\bibitem{behnia2023efficient}
R.~Behnia, M.~Ebrahimi, A.~Riasi, S.~S. Chow, B.~Padmanabhan, and T.~Hoang, ``Efficient secure aggregation for privacy-preserving federated machine learning,'' \emph{arXiv preprint arXiv:2304.03841}, 2023.

\bibitem{bell2020secure}
J.~H. Bell, K.~A. Bonawitz, A.~Gasc{\'o}n, T.~Lepoint, and M.~Raykova, ``Secure single-server aggregation with (poly) logarithmic overhead,'' in \emph{Proceedings of the 2020 ACM SIGSAC Conference on Computer and Communications Security}, 2020, pp. 1253--1269.

\bibitem{truex2019hybrid}
S.~Truex, N.~Baracaldo, A.~Anwar, T.~Steinke, H.~Ludwig, R.~Zhang, and Y.~Zhou, ``A hybrid approach to privacy-preserving federated learning,'' in \emph{Proceedings of the 12th ACM workshop on artificial intelligence and security}, 2019, pp. 1--11.

\bibitem{elahi2014privex}
T.~Elahi, G.~Danezis, and I.~Goldberg, ``Privex: Private collection of traffic statistics for anonymous communication networks,'' in \emph{Proceedings of the 2014 ACM SIGSAC Conference on Computer and Communications Security}, 2014, pp. 1068--1079.

\bibitem{yang2022post}
S.~Yang, Y.~Chen, S.~Tu, and Z.~Yang, ``A post-quantum secure aggregation for federated learning,'' in \emph{Proceedings of the 2022 12th International Conference on Communication and Network Security}, 2022, pp. 117--124.

\bibitem{chen2016report}
L.~Chen, L.~Chen, S.~Jordan, Y.-K. Liu, D.~Moody, R.~Peralta, R.~A. Perlner, and D.~Smith-Tone, \emph{NIST Report on post-quantum cryptography}.\hskip 1em plus 0.5em minus 0.4em\relax US Department of Commerce, National Institute of Standards and Technology~…, 2016, vol.~12.

\bibitem{nist2024pqc}
``Nist post-quantum cryptography standardization,'' https://csrc.nist.gov/projects/post-quantum-cryptography, 2024.

\bibitem{wh2024report}
``Report on post-quantum cryptogrphy,'' \url{https://www.whitehouse.gov/wp-content/uploads/2024/07/REF_PQC-Report_FINAL_Send.pdf}, 2024.

\bibitem{nistpqalgo}
``Nist selected post-quantum cryptography algorithms 2022,'' https://csrc.nist.gov/Projects/post-quantum-cryptography/selected-algorithms-2022, 2024.

\bibitem{conf/ccs/BehniaOYR18}
R.~Behnia, M.~O. Ozmen, A.~A. Yavuz, and M.~Rosulek, ``{TACHYON:} fast signatures from compact knapsack,'' in \emph{Proceedings of the {ACM} {SIGSAC} Conference on Computer and Communications Security ({CCS})}.\hskip 1em plus 0.5em minus 0.4em\relax {ACM}, 2018, pp. 1855--1867.

\bibitem{conf/ACSAC/BehniaY21}
R.~Behnia and A.~A. Yavuz, ``Towards practical post-quantum signatures for resource-limited internet of things,'' in \emph{Proceedings of the 37th Annual Computer Security Applications Conference}, 2021, p. 119–130.

\bibitem{dwork2006differential}
C.~Dwork, ``Differential privacy,'' in \emph{International colloquium on automata, languages, and programming}.\hskip 1em plus 0.5em minus 0.4em\relax Springer, 2006, pp. 1--12.

\bibitem{AbadiCGMMT016}
\BIBentryALTinterwordspacing
M.~Abadi, A.~Chu, I.~J. Goodfellow, H.~B. McMahan, I.~Mironov, K.~Talwar, and L.~Zhang, ``Deep learning with differential privacy,'' in \emph{Proceedings of the 2016 {ACM} {SIGSAC} Conference on Computer and Communications Security, Vienna, Austria, October 24-28, 2016}, E.~R. Weippl, S.~Katzenbeisser, C.~Kruegel, A.~C. Myers, and S.~Halevi, Eds.\hskip 1em plus 0.5em minus 0.4em\relax {ACM}, 2016, pp. 308--318. [Online]. Available: \url{https://doi.org/10.1145/2976749.2978318}
\BIBentrySTDinterwordspacing

\bibitem{chen2021voltpillager}
Z.~Chen, G.~Vasilakis, K.~Murdock, E.~Dean, D.~Oswald, and F.~D. Garcia, ``$\{$VoltPillager$\}$: Hardware-based fault injection attacks against intel $\{$SGX$\}$ enclaves using the $\{$SVID$\}$ voltage scaling interface,'' in \emph{30th USENIX Security Symposium (USENIX Security 21)}, 2021, pp. 699--716.

\bibitem{van2018foreshadow}
J.~Van~Bulck, M.~Minkin, O.~Weisse, D.~Genkin, B.~Kasikci, F.~Piessens, M.~Silberstein, T.~F. Wenisch, Y.~Yarom, and R.~Strackx, ``Foreshadow: Extracting the keys to the intel $\{$SGX$\}$ kingdom with transient $\{$Out-of-Order$\}$ execution,'' in \emph{27th USENIX Security Symposium (USENIX Security 18)}, 2018, pp. 991--1008.

\bibitem{chamikara2022local}
M.~A.~P. Chamikara, D.~Liu, S.~Camtepe, S.~Nepal, M.~Grobler, P.~Bertok, and I.~Khalil, ``Local differential privacy for federated learning,'' \emph{arXiv preprint arXiv:2202.06053}, 2022.

\bibitem{truex2020ldp}
S.~Truex, L.~Liu, K.-H. Chow, M.~E. Gursoy, and W.~Wei, ``Ldp-fed: Federated learning with local differential privacy,'' in \emph{Proceedings of the third ACM international workshop on edge systems, analytics and networking}, 2020, pp. 61--66.

\bibitem{kairouz2021advances}
P.~Kairouz, H.~B. McMahan, B.~Avent, A.~Bellet, M.~Bennis, A.~N. Bhagoji, K.~Bonawitz, Z.~Charles, G.~Cormode, R.~Cummings \emph{et~al.}, ``Advances and open problems in federated learning,'' \emph{Foundations and trends{\textregistered} in machine learning}, vol.~14, no. 1--2, pp. 1--210, 2021.

\bibitem{balle2020privacy}
B.~Balle, P.~Kairouz, B.~McMahan, O.~Thakkar, and A.~Guha~Thakurta, ``Privacy amplification via random check-ins,'' \emph{Advances in Neural Information Processing Systems}, vol.~33, pp. 4623--4634, 2020.

\bibitem{ramaswamy2020training}
S.~Ramaswamy, O.~Thakkar, R.~Mathews, G.~Andrew, H.~B. McMahan, and F.~Beaufays, ``Training production language models without memorizing user data,'' \emph{arXiv preprint arXiv:2009.10031}, 2020.

\bibitem{flamingo}
Y.~Ma, J.~Woods, S.~Angel, A.~Polychroniadou, and T.~Rabin, ``Flamingo: Multi-round single-server secure aggregation with applications to private federated learning,'' in \emph{2023 IEEE Symposium on Security and Privacy (SP)}.\hskip 1em plus 0.5em minus 0.4em\relax IEEE, 2023, pp. 477--496.

\bibitem{microfedml}
Y.~Guo, A.~Polychroniadou, E.~Shi, D.~Byrd, and T.~Balch, ``Microfedml: Privacy preserving federated learning for small weights,'' in \emph{The 4th Privacy-Preserving Machine Learning Workshop 2022}, 2022, p. available: https://ia.cr/2022/714.

\bibitem{ducas2018crystals}
L.~Ducas, E.~Kiltz, T.~Lepoint, V.~Lyubashevsky, P.~Schwabe, G.~Seiler, and D.~Stehl{\'e}, ``Crystals-dilithium: A lattice-based digital signature scheme,'' \emph{IACR Transactions on Cryptographic Hardware and Embedded Systems}, pp. 238--268, 2018.

\bibitem{bos2018crystals}
J.~Bos, L.~Ducas, E.~Kiltz, V.~Lyubashevsky, J.~M. Schanck, P.~Schwabe, G.~Seiler, and D.~Stehl{\'e}, ``Crystals-kyber: a cca-secure module-lattice-based kem,'' in \emph{2018 IEEE European Symposium on Security and Privacy (EuroS\&P)}.\hskip 1em plus 0.5em minus 0.4em\relax IEEE, 2018, pp. 353--367.

\bibitem{yavuz2013eta}
A.~A. Yavuz, ``Eta: efficient and tiny and authentication for heterogeneous wireless systems,'' in \emph{Proceedings of the sixth ACM conference on Security and privacy in wireless and mobile networks}, 2013, pp. 67--72.

\bibitem{Tachyon}
R.~Behnia, M.~O. Ozmen, A.~A. Yavuz, and M.~Rosulek, ``Tachyon: Fast signatures from compact knapsack,'' in \emph{Proceedings of the 2018 ACM SIGSAC Conference on Computer and Communications Security}, ser. CCS '18.\hskip 1em plus 0.5em minus 0.4em\relax New York, NY, USA: ACM, 2018, pp. 1855--1867.

\bibitem{dilithium}
L.~Ducas, T.~Lepoint, V.~Lyubashevsky, P.~Schwabe, G.~Seiler, and D.~Stehle, ``Crystals -- dilithium: Digital signatures from module lattices,'' Cryptology ePrint Archive, Report 2017/633, 2017, \url{http://eprint.iacr.org/2017/633}.

\bibitem{sun2020ldp}
L.~Sun, J.~Qian, and X.~Chen, ``Ldp-fl: Practical private aggregation in federated learning with local differential privacy,'' \emph{arXiv preprint arXiv:2007.15789}, 2020.

\bibitem{geyer2017differentially}
R.~C. Geyer, T.~Klein, and M.~Nabi, ``Differentially private federated learning: A client level perspective,'' \emph{arXiv preprint arXiv:1712.07557}, 2017.

\bibitem{yang2023privatefl}
Y.~Yang, B.~Hui, H.~Yuan, N.~Gong, and Y.~Cao, ``$\{$PrivateFL$\}$: Accurate, differentially private federated learning via personalized data transformation,'' in \emph{32nd USENIX Security Symposium (USENIX Security 23)}, 2023, pp. 1595--1612.

\bibitem{Bell2020}
\BIBentryALTinterwordspacing
J.~H. Bell, K.~A. Bonawitz, A.~Gasc\'{o}n, T.~Lepoint, and M.~Raykova, ``Secure single-server aggregation with (poly)logarithmic overhead,'' in \emph{Computer and Communications Security}, ser. CCS '20.\hskip 1em plus 0.5em minus 0.4em\relax New York, NY, USA: Association for Computing Machinery, 2020, p. 1253–1269. [Online]. Available: \url{https://doi.org/10.1145/3372297.3417885}
\BIBentrySTDinterwordspacing

\bibitem{abadi2016deep}
M.~Abadi, A.~Chu, I.~Goodfellow, H.~B. McMahan, I.~Mironov, K.~Talwar, and L.~Zhang, ``Deep learning with differential privacy,'' in \emph{Proceedings of the 2016 ACM SIGSAC conference on computer and communications security}, 2016, pp. 308--318.

\bibitem{paquin2020benchmarking}
C.~Paquin, D.~Stebila, and G.~Tamvada, ``Benchmarking post-quantum cryptography in tls,'' in \emph{Post-Quantum Cryptography: 11th International Conference, PQCrypto 2020, Paris, France, April 15--17, 2020, Proceedings 11}.\hskip 1em plus 0.5em minus 0.4em\relax Springer, 2020, pp. 72--91.

\bibitem{hayes2023bounding}
J.~Hayes, B.~Balle, and S.~Mahloujifar, ``Bounding training data reconstruction in dp-sgd,'' \emph{Advances in neural information processing systems}, vol.~36, pp. 78\,696--78\,722, 2023.

\bibitem{shamirSecret}
\BIBentryALTinterwordspacing
A.~Shamir, ``How to share a secret,'' \emph{Commun. ACM}, vol.~22, no.~11, p. 612–613, nov 1979. [Online]. Available: \url{https://doi.org/10.1145/359168.359176}
\BIBentrySTDinterwordspacing

\bibitem{bernstein2015sphincs}
D.~J. Bernstein, D.~Hopwood, A.~H{\"u}lsing, T.~Lange, R.~Niederhagen, L.~Papachristodoulou, M.~Schneider, P.~Schwabe, and Z.~Wilcox-O’Hearn, ``Sphincs: practical stateless hash-based signatures,'' in \emph{Annual international conference on the theory and applications of cryptographic techniques}.\hskip 1em plus 0.5em minus 0.4em\relax Springer, 2015, pp. 368--397.

\bibitem{li2024enhancing}
P.~Li, T.~Chen, and J.~Liu, ``Enhancing quantum security over federated learning via post-quantum cryptography,'' \emph{arXiv preprint arXiv:2409.04637}, 2024.

\bibitem{behnia2021removing}
R.~Behnia, Y.~Chen, and D.~Masny, ``On removing rejection conditions in practical lattice-based signatures,'' in \emph{Post-Quantum Cryptography: 12th International Workshop, PQCrypto 2021, Daejeon, South Korea, July 20--22, 2021, Proceedings 12}.\hskip 1em plus 0.5em minus 0.4em\relax Springer, 2021, pp. 380--398.

\bibitem{byrd2019abides}
D.~Byrd, M.~Hybinette, and T.~H. Balch, ``Abides: Towards high-fidelity market simulation for ai research,'' \emph{arXiv preprint arXiv:1904.12066}, 2019.

\bibitem{kyberpy}
``Crystals-kyber python implementation,'' https://github.com/GiacomoPope/kyber-py, 2024.

\bibitem{dilic}
``Crystals-dilithium c implementation,'' https://github.com/pq-crystals/dilithium, 2024.

\bibitem{dobraunig2021ascon}
C.~Dobraunig, M.~Eichlseder, F.~Mendel, and M.~Schl{\"a}ffer, ``Ascon v1. 2: Lightweight authenticated encryption and hashing,'' \emph{Journal of Cryptology}, vol.~34, pp. 1--42, 2021.

\bibitem{beutel2020flower}
D.~J. Beutel, T.~Topal, A.~Mathur, X.~Qiu, J.~Fernandez-Marques, Y.~Gao, L.~Sani, K.~H. Li, T.~Parcollet, P.~P.~B. de~Gusm{\~a}o \emph{et~al.}, ``Flower: A friendly federated learning research framework,'' \emph{arXiv preprint arXiv:2007.14390}, 2020.

\bibitem{opacus}
``Pytorch opacus,'' https://github.com/pytorch/opacus, 2024.

\bibitem{mironov2017renyi}
I.~Mironov, ``R{\'e}nyi differential privacy,'' in \emph{2017 IEEE 30th computer security foundations symposium (CSF)}.\hskip 1em plus 0.5em minus 0.4em\relax IEEE, 2017, pp. 263--275.

\bibitem{xiao2017fashion}
H.~Xiao, K.~Rasul, and R.~Vollgraf, ``Fashion-mnist: a novel image dataset for benchmarking machine learning algorithms,'' \emph{arXiv preprint arXiv:1708.07747}, 2017.

\bibitem{cohen2017emnist}
G.~Cohen, S.~Afshar, J.~Tapson, and A.~Van~Schaik, ``Emnist: Extending mnist to handwritten letters,'' in \emph{2017 international joint conference on neural networks (IJCNN)}.\hskip 1em plus 0.5em minus 0.4em\relax IEEE, 2017, pp. 2921--2926.

\bibitem{krizhevsky2009learning}
A.~Krizhevsky, G.~Hinton \emph{et~al.}, ``Learning multiple layers of features from tiny images,'' 2009.

\bibitem{kather2016multi}
J.~N. Kather, C.-A. Weis, F.~Bianconi, S.~M. Melchers, L.~R. Schad, T.~Gaiser, A.~Marx, and F.~G. Z{\"o}llner, ``Multi-class texture analysis in colorectal cancer histology,'' \emph{Scientific reports}, vol.~6, no.~1, pp. 1--11, 2016.

\bibitem{zheng2022aggregation}
Y.~Zheng, S.~Lai, Y.~Liu, X.~Yuan, X.~Yi, and C.~Wang, ``Aggregation service for federated learning: An efficient, secure, and more resilient realization,'' \emph{IEEE Transactions on Dependable and Secure Computing}, vol.~20, no.~2, pp. 988--1001, 2022.

\bibitem{bell2023acorn}
J.~Bell, A.~Gasc{\'o}n, T.~Lepoint, B.~Li, S.~Meiklejohn, M.~Raykova, and C.~Yun, ``$\{$ACORN$\}$: input validation for secure aggregation,'' in \emph{32nd USENIX Security Symposium (USENIX Security 23)}, 2023, pp. 4805--4822.

\bibitem{roy2022eiffel}
A.~Roy~Chowdhury, C.~Guo, S.~Jha, and L.~van~der Maaten, ``Eiffel: Ensuring integrity for federated learning,'' in \emph{Proceedings of the 2022 ACM SIGSAC Conference on Computer and Communications Security}, 2022, pp. 2535--2549.

\bibitem{wang2024model}
X.~Wang, X.~Liu, and X.~Yi, ``Model extraction attack on mpc hardened vertical federated learning,'' in \emph{International Conference on Provable Security}.\hskip 1em plus 0.5em minus 0.4em\relax Springer, 2024, pp. 63--82.

\bibitem{eseafl}
R.~Behnia, A.~Riasi, R.~Ebrahimi, S.~S.~M. Chow, B.~Padmanabhan, and T.~Hoang, ``Efficient secure aggregation for privacy-preserving federated machine learning,'' in \emph{2024 Annual Computer Security Applications Conference (ACSAC)}, 2024, pp. 778--793.

\bibitem{lycklama2023rofl}
H.~Lycklama, L.~Burkhalter, A.~Viand, N.~K{\"u}chler, and A.~Hithnawi, ``Rofl: Robustness of secure federated learning,'' in \emph{2023 IEEE Symposium on Security and Privacy (SP)}.\hskip 1em plus 0.5em minus 0.4em\relax IEEE, 2023, pp. 453--476.

\bibitem{zhao2021sear}
L.~Zhao, J.~Jiang, B.~Feng, Q.~Wang, C.~Shen, and Q.~Li, ``Sear: Secure and efficient aggregation for byzantine-robust federated learning,'' \emph{IEEE Transactions on Dependable and Secure Computing}, vol.~19, no.~5, pp. 3329--3342, 2021.

\bibitem{guo2020v}
X.~Guo, Z.~Liu, J.~Li, J.~Gao, B.~Hou, C.~Dong, and T.~Baker, ``Verifl: Communication-efficient and fast verifiable aggregation for federated learning,'' \emph{IEEE Transactions on Information Forensics and Security}, vol.~16, pp. 1736--1751, 2020.

\end{thebibliography}

%







\appendices
\newpage
\pagebreak
\section{Detailed Signing Algorithms}~\label{sec:app:algo}

\begin{algorithm}[]
\centering

\raggedright{\textbf{$(\sk_\kem,\pk_\kem)\gets \sgnkeygen(1^\kappa)$}}
\vspace{-2mm}

\hrulefill 
\begin{algorithmic}[1]
\State $\mathbf{A} \leftarrow \mathbb{R}_q^{k \times \ell}$ 
\State $(\mathbf{s}_1, \mathbf{s}_2) \leftarrow \mathbb{S}_\eta^{\ell} \times \mathbb{S}_\eta^{k}$ 
\State $\mathbf{t} := \mathbf{A}\mathbf{s}_1 + \mathbf{s}_2$ 
\State \textbf{return} $(\textit{pk} = (\mathbf{A}, \mathbf{t}), \textit{sk} = (\mathbf{A}, \mathbf{t}, \mathbf{s}_1, \mathbf{s}_2))$ 
\end{algorithmic}

\hrulefill \vspace{-1mm}

\textbf{$\sigma \gets \sgnsign(\sk_\Pi, m)$}
\vspace{-2mm}

\hrulefill

\begin{algorithmic}[1]


\State $\mathbf{z} := \perp$ 
\State \textbf{while} $\mathbf{z} = \perp$ \textbf{do} 
\State \hspace{\algorithmicindent} $\mathbf{y} \leftarrow \mathbb{S}_{\gamma_1 - 1}^{\ell}$ 
\State \hspace{\algorithmicindent} $\mathbf{w}_1 := \mathsf{HighBits}(\mathbf{A}\mathbf{y}, 2\gamma_2)$ 
\State \hspace{\algorithmicindent} $c \in \mathbb{B}_{60} := \mathcal{H}(m \parallel \mathbf{w}_1)$ 
\State \hspace{\algorithmicindent} $\mathbf{z} := \mathbf{y} + c\mathbf{s}_1$ 
\State \hspace{\algorithmicindent} \textbf{if} $\|\mathbf{z}\|_\infty \geq \gamma_1 - \beta$ 
\State \hspace{\algorithmicindent} \hspace{1em} \textbf{or} $\|\mathsf{LowBits}(\mathbf{A}\mathbf{y} - c\mathbf{s}_2, 2\gamma_2)\|_\infty \geq \gamma_2 - \beta$, 
\State \hspace{\algorithmicindent} \hspace{1em} \textbf{then} $\mathbf{z} := \perp$ 
\State  \textbf{return} $\sigma = (\mathbf{z}, c)$ 
\end{algorithmic}

\hrulefill \vspace{-1mm}

\textbf{$\{0,1\}\gets \sgnverify(\pk_\Pi,m,\sigma)$}
\vspace{-2mm}

\hrulefill

\begin{algorithmic}[1]
\State  $\mathbf{w}_1' := \mathsf{HighBits}(\mathbf{A}\mathbf{z} - c\mathbf{t}, 2\gamma_2)$ 
\State  \textbf{if} $\|\mathbf{z}\|_\infty < \gamma_1 - \beta$ \textbf{and} $c = \mathcal{H}(m \parallel \mathbf{w}_1')$ \textbf{then return} 1 
\State  \textbf{else return} 0
\end{algorithmic}
\caption{Dilithium}\label{alg:dilithium}
\end{algorithm}
\begin{algorithm}
\caption{Dilithium with Precomputation}\label{alg:dilifull}

\textbf{Precomputing: $\mathcal{LS} \gets \precomputesgn(\sk_\Pi, N)$}

\raggedright{\textbf{Input:} Dilithium private key $\sk_\Pi$ and the number of groups for Dilithium parameters to be precomputed $N$.}

\textbf{Output:} $\mathcal{LS}$, in which $\mathcal{LS}[\sk_\Pi]$ including $N$ groups of precomputed Dilithium parameters for $\sk_\Pi$

\hrulefill 

\begin{algorithmic}[1]
\State $\mathcal{LS}[\sk_\Pi] := \emptyset$ 
\State $\mathbf{A} \in R_{q}^{k \times l} := \mathsf{ExpandA}(\rho)$
\State $u \in \{0,1\}^{384} := \mathsf{CRH}(tr)$

\Comment{\mal{In original Dilithium: $\mu \gets \mathsf{CRH}(tr||m)$}}
\State $\kappa := 0$
\While{\text{Length of} $\mathcal{LS}[\sk_\Pi] < N$} 
    \State $y \in \mathcal{S}^l_{\gamma_1 - 1} := \mathsf{ExpandMask}(K \parallel u \parallel \kappa)$
    
    \Comment{\mal{Using $u$ not $\mu$}}
    \State $\mathbf{w} := \mathbf{Ay}$
    \State $\mathbf{w_1} := \mathsf{HighBits}(\mathbf{w}, 2\gamma_2)$
    \If{($A, u, y, w, w_1$) not in $\mathcal{LS}[\sk_\Pi]$}
        \State Add ($A, u, y, w, w_1$) into $\mathcal{LS}[\sk_\Pi]$
    \EndIf
    \State $\kappa := \kappa + 1$
\EndWhile
\State \textbf{return} $\mathcal{LS}$ 
\end{algorithmic}

\vspace{-1mm}
\hrulefill

\textbf{Signing with precomputation: $\sigma \gets \presgnsign(\sk_\Pi, m, \mathcal{LS})$} 

\raggedright{\textbf{Input:} Dilithium private key $\sk_\Pi$, message to be signed $m$, and the list of precomputed Dilithium parameters $\mathcal{LS}$.}

\textbf{Output:} the signature $\sigma$.

\hrulefill

\begin{algorithmic}[1]

\For{$(A, u, y, w, w_1) \text{ in } \mathcal{LS}[\sk_\Pi]$}
    \State $c \in B_{60} := H(m \parallel u \parallel w_1)$
    \State $\mathbf{z} := \mathbf{y} + c\mathbf{s_1}$
    \State $(\mathbf{r_0}, \mathbf{r_1}) := \mathsf{Decompose}(\mathbf{w}-c\mathbf{s_2}, 2\gamma_2)$
    \If{$\lVert z\rVert_{\infty} \leq \gamma_1 - \beta \text{ and } \lVert \mathbf{r_0} \rVert_{\infty} \leq \gamma_2 - \beta$}
        \State Remove $(A, u, y, w, w_1)$ from $\mathcal{LS}[\sk_\Pi]$
        \State \Return{$\sigma = (z, c)$}
    \EndIf
\EndFor

\Comment{\mal{If none of the precomputed parameters are satisfied, invoke the original Dilithium signing algorithm.}}
\State $\mu \gets \mathsf{CRH}(tr||m)$
\State $z := \bot$
\While{$z = \bot$}
    \State $y \in \mathcal{S}^l_{\gamma_1 - 1} := \mathsf{ExpandMask}(K \parallel \mu \parallel \kappa)$
    \State $\mathbf{w} := \mathbf{Ay}$
    \State $\mathbf{w_1} := \mathsf{HighBits}(\mathbf{w}, 2\gamma_2)$
    
    \State $c \in \mathcal{B}^{60} := H(\mu \parallel w_1)$
    \State $\mathbf{z} := \mathbf{y} + c\mathbf{s_1}$
    \State $(\mathbf{r_0}, \mathbf{r_1}) := \mathsf{Decompose}(\mathbf{w}-c\mathbf{s_2}, 2\gamma_2)$
    \If{$\lVert z\rVert_{\infty} > \gamma_1 - \beta \text{ or } \lVert \mathbf{r_0} \rVert_{\infty} > \gamma_2 - \beta$}
        \State $z := \bot$
    \EndIf
\EndWhile
\State \Return{$\sigma = (z, c)$}

\end{algorithmic}
\end{algorithm}

The optimized signing algorithms with precomputation (Algorithm~\ref{alg:dilifull}) build upon the Dilithium algorithms~\cite{dilithium} (Algorithm~\ref{alg:dilithium}). 
The precomputation process aims to enhance signing efficiency by precomputing and storing parameters related to the private key $\sk_\Pi$, reducing the need for new randomness and complex lattice computations during each runtime signing operation. 
The procedure iteratively computes values, including a matrix $\mathbf{A}$ (derived from a seed $\rho$) and a masking vector $\mathbf{y}$, generated through the $\mathsf{ExpandMask}$ function. This function combines the private key and random inputs to expand values. The computed sets are stored in $\mathcal{LS}[\sk_\Pi]$ until $N$ precomputed groups are achieved.

During signing, the precomputed values in $\mathcal{LS}[\sk_\Pi]$ are used to generate signatures more efficiently. The algorithm iterates through $\mathcal{LS}[\sk_\Pi]$, checking for compliance with validity conditions defined by threshold parameters $(\gamma_1, \gamma_2)$. When a valid set is identified, it is used for signature construction, and that set is removed from the precomputed list.
If no precomputed sets meet the conditions, the algorithm reverts to the original Dilithium signing method, computing parameters in real-time. This fallback ensures that the signing process remains functional, even when precomputed values are unavailable or unsuitable.

\section{Cryptographic Security Notions}\label{sec:cryptonotions}

\begin{definition}[EU-CMA]\label{def:EUCMA}
Existential Unforgeability under Chosen Message Attack (EU-CMA) experiment $Expt^{\text{EU-CMA}}_{\Pi,A}$  for  a signature scheme $\Pi$:$(\sgnkeygen$, $\sgnsign$, $\sgnverify)$ with an adversary $A$ is defined as follows. 

\begin{itemize}
    \item $(\sk,\pk)\gets\Pi.\sgnkeygen(1^\kappa)$
    \item $(m^*,\sigma^*)\gets A^{\Pi.\sgnsign(\cdot)}(\pk)$
    \item If $1\gets\Pi.\sgnverify(\cdot)$ and $m^*$ was never queries to $\sgnverify(\cdot)$, return `success' otherwise, output `$\bot$'. 
    
\end{itemize}

\end{definition}

\begin{definition}[IND-CCA]\label{def:INDCCA}
The indistinguishability of a key encapsulation scheme  $\kem=\{\sgnkeygen, \kemenc, \kemdec\}$ with key space $\mathcal{K}$ under chosen ciphertext attack (IND-CCA) experiment $Expt^{\text{IND-CCA}}_{\kem,A}$ with an adversary $A$ is defined as follows. 

\begin{itemize}
    \item $(\sk,\pk)\gets\Pi.\sgnkeygen(1^\kappa)$
     \item $b\gets \{0,1\},(c_{x},x_0)\gets\Pi.\kemenc(\pk)$
     \item $x_1\gets\mathcal{K}$
    \item $b'\gets A^{\Pi.\kemenc(\cdot),\Pi.\kemdec(\cdot)}(\pk,c_{x},x_b)$

\end{itemize}
The advantage of the adversary in the above experiment is defined as $\Pr[b=b']\leq \frac{1}{2} +\varepsilon$ for a negligible $\varepsilon$. 
\end{definition}


\end{document}